\newtheorem{theorem}{Theorem}[section]
\newtheorem{definition}{Definition}[section]
\newtheorem{proposition}[theorem]{Proposition}
\newtheorem{remark}[theorem]{Remark}
\newtheorem{hypothesis}[theorem]{Hypothesis {\bf H.}\hspace*{-0.6ex}}
\newcommand{\R}{{\mathbb R}}
\newcommand{\C}{{\mathbb C}}
\newcommand{\nn}{\nonumber}
\newcommand{\be}{\begin{equation}}
\newcommand{\ee}{\end{equation}}
\newcommand{\bea}{\begin{eqnarray}}
\newcommand{\eea}{\end{eqnarray}}
\newcommand{\ba}{\begin{array}}
\newcommand{\ea}{\end{array}}
\newcommand{\ul}{\underline}
\newcommand{\ol}{\overline}
\newcommand{\id}{\mathbb{I}}
\newcommand{\re}{\mathrm{Re}}
\newcommand{\im}{\mathrm{Im}}
\newcommand{\ca}{\mathcal{A}}
\newcommand{\cb}{\mathcal{B}}
\newcommand{\cK}{\mathcal{K}}
\newcommand{\sig}{\sigma}
\newcommand{\lam}{\lambda}
\newcommand{\Lam}{\Lambda}
\newcommand{\om}{\omega}
\newcommand{\Om}{\Omega}
\newcommand{\x}{\xi}
\newcommand{\dta}{\delta}
\newcommand{\Dta}{\Delta}
\newcommand{\al}{\alpha}
\newcommand{\tha}{\theta}
\numberwithin{equation}{section}
\begin{document}
\title[DNLS equation on the interval]{The derivative nonlinear Schr\"odinger equation on the interval}

\author[J.Xu]{Jian Xu}
\address{School of Mathematical Sciences\\
Fudan University\\
Shanghai 200433\\
People's  Republic of China}
\email{11110180024@fudan.edu.cn}

\author[E.Fan]{Engui Fan$^1$}\footnote{Corresponding author and e-mail: faneg@fudan.edu.cn}
\address{School of Mathematical Sciences, Institute of Mathematics and Key Laboratory of Mathematics for Nonlinear Science\\
Fudan University\\
Shanghai 200433\\
People's  Republic of China}
\email{faneg@fudan.edu.cn}

\keywords{Riemann-Hilbert problem, DNLS equation, global relation, finite interval, initial-boundary value problem}

\date{\today}

\begin{abstract}
We use the Fokas method to analyze the derivative nonlinear Schr\"odinger (DNLS) equation $iq_t(x,t)=-q_{xx}(x,t)+(r q^2)_x$ on the interval $[0,L]$. Assuming that the solution $q(x,t)$ exists, we show that it can be represented in terms of the solution of a matrix Riemann-Hilbert problem formulated in the plane of the complex spectral parameter $\x$. This problem has explicit $(x,t)$ dependence, and it has jumps across $\{\x \in \C|\im{\x^4}=0 \}$. The relevant jump matrices are explicitly given in terms of the spectral functions $\{a(\x),b(\x)\},\{A(\x),B(\x)\}$, and $\{\ca({\x}),\cb(\x)\}$, which in turn are defined in terms of the initial data $q_0(x)=q(x,0)$, the boundary data $g_0(t)=q(0,t),g_1(t)=q_x(0,t)$, and another boundary values $f_0(t)=q(L,t),f_1(t)=q_x(L,t)$. The spectral functions are not independent, but related by a compatibility condition, the so-called global relation.
\end{abstract}

\maketitle

\section{Introduction}
\par

The inverse scattering transformation is an important method for solving initial value problems of complete integrable equations,
but it is currently unknown for the case of initial-boundary value problems.
   A new method based on the Riemann-Hilbert
factorization problem  to  solve initial-boundary value problems for nonlinear integrable systems was presented by Forkas  \cite{f1,f2},   and later  it was further developed by several authors \cite{fis,fi1,abmfs,abms}. The Fokas method is based on reducing the initial-boundary value problems to the Riemann-Hilbert problem on the complex plane
of the spectral parameter. But in the analysis of the initial-boundary value problem, we face the problem that for the construction of the associated Riemann-Hilbert problem, more boundary values
are needed  than  a well-posed initial-boundary value problem, since the boundary values  are dependent. In \cite{fi}, Fokas and Its extended the initial-boundary value problem for the nonlinear Schr\"odinger equation  on the half-line to the case of   initial-boundary value on the finite interval.

\par
In this paper, we analyze the Dirichlet initial-boundary value problem for the DNLS equation on a finite interval
\be \label{equ:DNLSequation}
iq_t=-q_{xx}+(r q^2)_x, \quad x\in (0,L),t\in (0,T),
\ee
\be
q(x,0)=q_0(x),x\in (0,L),
\ee
\be
q(0,t)=g_0(t), q(L,t)=f_0(t),t\in (0,T),
\ee
where $r=\pm {\bar q}$,  and $\bar q $ denotes complex conjugate of $q$, the subscripts denote differentiation with respect to the corresponding variables. $L$ and $T$ are positive constants,and $q_0, g_0, f_0$ are smooth functions compatible at $x=t=0$ and at $x=L, t=0$, i.e. $q_0(0)=g_0(0),  q_0(L)=f_0(0)$.

 The DNLS equation (\ref{equ:DNLSequation}) is also called Kaup-Newell equation \cite{kn}.   We just consider $r={\bar q}$,  because the two equations
\[
  iq_t(x,t)q+q_{xx}(x,t)=\pm (|q|^2q)_x
\]
can be transformed into each other by replacing $x \rightarrow {-x}$ \cite{l}.
\par
The DNLS Eq. (\ref{equ:DNLSequation})  has several applications in plasma physics. In plasma physics, it is a model for Alfv$\acute e$n waves propagating
 parallel to the ambient magnetic field, $q$ being the transverse magnetic field perturbation and $x$ and $t$ being space and time
 coordinates, respectively \cite{m}. For more physical  meaning of Eq. (\ref{equ:DNLSequation}), we refer to \cite{k,a,l}, and the references
 therein. Being integrable, Eq. (\ref{equ:DNLSequation}) admits an infinite number of conservation laws and can be analyzed by means of inverse
scattering techniques both in the case of vanishing and nonvanishing boundary conditions \cite{kn, ki}.  A tri-Hamiltonian structure of Eq.(\ref{equ:DNLSequation}) was put forward in \cite{mz}.
The Darboux transformation and soliton solutions   have  been investigated in \cite{ikws,vml, fan,xhw}. Recently, Lenells analyzed its  Riemann-Hilbert problem associated with initial-boundary value problem of the DNLS Eq.(1.1) on the half-line
\cite{l}.

\par
In this paper, we  extend Lenells's  result   to the initial-boundary value on the finite interval (1.1)-(1.3)   following the Fokas and Its idea   \cite{fi}.
We will show that the dependence of Riemann-Hilbert problem associated with initial-boundary values can be  characterized in terms of  spectral functions.
The spectral functions associated with initial and boundary values of a solution for the DNLS equation must satisfy certain global relations.
In the following section 2,  we derive   the spectral analysis of the Lax pair for Eq.(\ref{equ:DNLSequation}).   In section 3,
we investigate the spectral functions $a(\x),b(\x);A(\x),B(\x)$;$\ca({\x}),\cb(\x)$.  Then  Riemann-Hilbert problem associated with
 the initial-boundary value  (1.1)-(1.3)  is  further  presented.

\section{Spectral analysis under the assumption of existence}
The DNLS equation admits the Lax pair formulation \cite{kn,pdl}
\bea \label{Lax:DNLS1}
  v_{1x}+i\x^2v_1=q\x v_2,&&v_{2x}-i\x^2v_2=r\x v_1,\nonumber \\
  iv_{1t}=Av_1+Bv_2,&&iv_{2t}=Cv_1-Av_2,
\eea
where $\x \in \C$ is the spectral parameter, and
\bea
&&A=2\x^4+\x^2rq,B=2i\x^3q-\x q_x+i\x rq^2,\nonumber \\
&&C=2i\x^3r-\x r_x+i\x r^2q.\nonumber
\eea
\par
By introducing
\[
\ba{lcr}
\psi =\left(\ba{c}v_1\\v_2\ea \right),& Q =\left(\ba{lr}0&q\\r&0\ea \right),& \sig_3=\left(\ba{lr}1&0\\0&-1\ea \right), \ea
\]
we can rewrite the Lax pair (\ref{Lax:DNLS1}) in a matrix form

\bea \label{Lax:DNLS2}
  && \psi_x+i\x^2 \sig_3 \psi=\x Q \psi,   \nonumber \\
  && \psi_t+2i\x^4 \sig_3 \psi=(-i\x^2 Q^2 \sig_3+2 \x^3 Q-i\x Q_x \sig_3+\x Q^3) \psi,
\eea

\par
Extending the column vector $\psi$ to a $2\times 2$ matrix and letting
\[
\Psi=\psi e^{i(\x^2 x+2\x^4 t)\sig_3},
\]
we obtain the equivalent Lax pair
\bea \label{Lax:DNLS3}
  && \Psi_x+i\x^2 [\sig_3,\Psi]=\x Q \Psi,   \nonumber \\
  && \Psi_t+2i\x^4 [\sig_3,\Psi]=(-i\x^2 Q^2 \sig_3+2 \x^3 Q-i\x Q_x \sig_3+\x Q^3) \Psi,
\eea
which can be written in full derivative form
\be \label{Lax:DNLS4}
d(e^{i(\x^2 x+2\x^4 t)\hat \sig_3}\Psi(x,t,\x))=e^{i(\x^2 x+2\x^4 t)\hat \sig_3}U(x,t,\x)\Psi,
\ee
where
\be
U=U_1dx+U_2dt=\x Qdx+(-i\x^2 Q^2 \sig_3+2 \x^3 Q-i\x Q_x \sig_3+\x Q^3)dt.
\ee
\par
In order to formulate a Riemann-Hilbert problem for the solution of the inverse spectral problem,we seek solutions of the spectral problem which approach the $2\times 2$ identity matrix  as $\x \rightarrow \infty$. It turns out that solutions of Eq.(\ref{Lax:DNLS4}) do not exhibit this property, hence we use Lenell's method in Ref. \cite{l} to transform the solution $\Psi$ of Eq.(\ref{Lax:DNLS4}) into the desired asymptotic behavior. We write the process of the transformation as follows.

{\bf 2.1 Asymptotic analysis}

\par
Consider a solution of Eq.(\ref{Lax:DNLS4}) of the form
\[
\Psi=D+\frac{\Psi_1}{\x}+\frac{\Psi_2}{\x^2}+\frac{\Psi_3}{\x^3}+O(\frac{1}{\x^4}),\quad \x \rightarrow \infty
\]
where $D,\Psi_1,\Psi_2,\Psi_3$ are independent of $\x$. Substituting the above expansion into the first equation of
(\ref{Lax:DNLS3}),and comparing the same order of $\x$'s frequency, it follows from the $O(\x^2)$ terms that D is a diagonal matrix. Furthermore,one finds the following equations for
 the $O(\x)$ and the diagonal part of the $O(1)$ terms
\[
O(\x):i[\sig_3,\Psi_1]=QD,\quad
i.e. \quad \Psi_1^{(o)}=\frac{i}{2} QD\sig_3,
\]
with $\Psi_1^{(o)}$ being the off-diagonal part of $\Psi_1$,and
\[
O(1):D_x=Q\Psi_1^{(o)},
\]
i.e.
\be \label{2.1.1}
D_x=\frac{i}{2}Q^2\sig_3D.
\ee
\par
On the other hand, substituting the above expansion into the second equation of (\ref{Lax:DNLS3}), one obtains from that%for the $O(\x^3)$ terms%
\be\label{2.1.2}
O(\x^3):2i[\sig_3,\Psi_1]=2QD,\quad i.e. \quad \Psi_1^{(o)}=\frac{i}{2}QD\sig_3;
\ee
%for the off-diagonal part of the $O(\x)$ terms%
and
\be \label{2.1.3}
O(\x): 2i[\sig_3,\Psi_3]=-iQ^2 \sig_3 \Psi_1^{(o)}+2Q\Psi_2^{(d)}-iQ_x \sig_3 D+Q^3D,
\ee
i.e.
\be \label{2.1.4}
-iQ^2 \sig_3 \Psi_2^{(d)}+2Q\Psi_3^{(o)}=-\frac{1}{2}Q^3\Psi_1^{(o)}+\frac{1}{2}QQ_xD+\frac{i}{2}Q^4\sig_3D,
\ee
where $\Psi_2^{(d)}$ denotes the diagonal part of $\Psi_2$; and for the diagonal part of the $O(1)$ terms
\[
O(1):D_t=-iQ^2 \sig_3 \Psi_2^{(d)}+2Q\Psi_3^{(o)}-iQ_x\sig_3\Psi_1^{(o)}+Q^3\Psi_1^{(o)},
\]
again,using (\ref{2.1.2}) and (\ref{2.1.4}),we have
\[
D_t=(\frac{3i}{4}Q^4\sig_3+\frac{1}{2}[Q,Q_x])D,
\]
\par
which can be written in terms of $q$ and $r$ as
\be \label{2.1.5}
D_t=(\frac{3i}{4}r^2 q^2+\frac{1}{2}(r_xq-rq_x))\sig_3D.
\ee

{\bf 2.2 Desired Lax pair}
\par
We note that Eq.(\ref{equ:DNLSequation}) admits the conservation law
\[
(\frac{i}{2}rq)_t=(\frac{3i}{4}r^2 q^2+\frac{1}{2}(r_xq-rq_x))_x.
\]
\par
Consequently, the two Eqs.(\ref{2.1.1}) and (\ref{2.1.5}) for $D$ are consistent and are both satisfied if we define
\be \label{D}
D(x,t)=e^{i\int_{(L,0)}^{(x,t)}\Dta \sig_3},
\ee
where $\Dta$ is the closed real-valued one-form
\be \label{Dta}
\Dta(x,t)=\frac{1}{2}dx+(\frac{3}{4}r^2 q^2-\frac{i}{2}(r_xq-rq_x))dt .
\ee
Noting that the integral in (\ref{D}) is independent of the path of integration and the $\Dta$ is independent of $\x$,
then we introduce a new function $\mu$ by
\be \label{mu}
\Psi(x,t,\x)=e^{i\int_{(0,0)}^{(x,t)}\Dta \hat \sig_3}\mu(x,t,\x)D(x,t),
\ee
Thus,we have
\be \label{muinfty}
\mu=\id+O(\frac{1}{\x}),\quad \x \rightarrow \infty,
\ee
and the Lax pair of Eq.(\ref{Lax:DNLS4}) becomes
\be \label{Lax:DNLS5}
d(e^{i(\x^2 x+2\x^4 t)\hat \sig_3}\mu(x,t,\x))=W(x,t,\x),
\ee
where
\[
W(x,t,\x)=e^{i(\x^2 x+2\x^4 t)\hat \sig_3}V(x,t,\x)\mu,
\]

\[
V=V_1dx+V_2dt=e^{-i\int_{(0,0)}^{(x,t)}\Dta \hat \sig_3}(U-i\Dta \sig_3).
\]
Taking into account the definition of $U$ and $\Dta$,we find that
\be \label{V1}
V_1=\left (\ba{ll} -\frac{i}{2}rq&\x qe^{-2i\int_{(0,0)}^{(x,t)}\Dta}\\
                  \x re^{2i\int_{(0,0)}^{(x,t)}\Dta}&\frac{i}{2}rq \ea
                  \right ),
\ee
\be \label{V2}
V_2=\left(\ba{ll} -i\x^2rq-\frac{3i}{4}r^2 q^2-\frac{1}{2}(r_xq-rq_x)&(2\x^3q+i\x q_x+\x q^2r)e^{-2i\int_{(0,0)}^{(x,t)}\Dta}\\
                  (2\x^3r+i\x r_x+\x r^2q)e^{2i\int_{(0,0)}^{(x,t)}\Dta}&i\x^2rq+\frac{3i}{4}r^2 q^2+\frac{1}{2}(r_xq-rq_x)
    \ea \right).
\ee
Then Eq.(\ref{Lax:DNLS5}) for $\mu$ can be written as
\bea \label{Lax:DNLS6}
&&\mu_x+i\x^2[\sig_3,\mu]=V_1\mu,\nonumber \\
&&\mu_t+2i\x^4[\sig_3,\mu]=V_2\mu.
\eea

{\bf 2.3  Eigenfunctions and their relations}
\par
Throughout this section we assume that $q(x,t)$ is sufficiently smooth,in
\[
\Om=\{0<x<L,0<t<T\}
\]
where $T\le \infty$ is a given positive constant;unless otherwise specified,we suppose that $T<\infty$.
%\begin{center}
%\setlength{\unitlength}{1mm}
%\begin{picture}(55,35)\label{fourpoint}
%\put(0,0){\vector(1,0){55}}
%\put(0,0){\vector(0,1){35}}
%\put(0,0){\line(1,0){40}}
%\put(0,0){\line(0,1){25}}
%\put(40,0){\line(0,1){25}}
%\put(0,25){\line(1,0){40}}
%\put(20,15){$\bullet$(x,t)}
%\put(-1,24){$\bullet$}
%\put(-3,24){1}
%\put(-1,-1){$\bullet$}
%\put(-3,1){2}
%\put(39,-1){$\bullet$}
%\put(41,1){3}
%\put(39,24){$\bullet$}
%\put(41,24){4}
%\end{picture}
%\end{center}
\par
Following the idea in Ref.\cite{f}, we define four solutions of Eq.(\ref{Lax:DNLS5}) by
\be \label{muj}
\mu_j(x,t,\x)=\id+\int_{(x_j,t_j)}^{(x,t)}e^{-i(\x^2 x+2\x^4 t)\hat \sig_3}W(y,\tau,\x),\qquad j=1,2,3,4,
\ee
where $(x_1,t_1)=(0,T),(x_2,t_2)=(0,0),(x_3,t_3)=(L,0)$,and $(x_4,t_4)=(L,T)$,see Figure 1.
\par

Since the one-form $W$ is exact, the integral on the righthand side of (\ref{muj}) is independent of the path of integration.We choose the particular contours shown in Figure 2. By splitting the line integrals into integrals parallel to the $t$ and the $x$ axis we find
\be \label{mu1}
\begin{split}
\mu_1(x,t,\x)=\id+\int_0^x e^{i\x^2(y-x)\hat \sig_3}(V_1\mu_1)(y,t,\x)dy &\\
-e^{-i\x^2x\hat \sig_3}\int_t^T e^{2i\x^4(\tau -t)\hat \sig_3}(V_2\mu_1)(0,\tau,\x)d\tau,
\end{split}
\ee
\be \label{mu2}
\begin{split}
\mu_2(x,t,\x)=\id+\int_0^x e^{i\x^2(y-x)\hat \sig_3}(V_1\mu_2)(y,t,\x)dy  &\\
+e^{-i\x^2x\hat \sig_3}\int_0^t e^{2i\x^4(\tau -t)\hat \sig_3}(V_2\mu_2)(0,\tau,\x)d\tau,
\end{split}
\ee
\be \label{mu3}
\begin{split}
\mu_3(x,t,\x)=\id-\int_x^L e^{i\x^2(y-x)\hat \sig_3}(V_1\mu_3)(y,t,\x)dy &\\
+e^{-i\x^2(L-x)\hat \sig_3}\int_0^t e^{2i\x^4(\tau -t)\hat \sig_3}(V_2\mu_3)(L,\tau,\x)d\tau,
\end{split}
\ee
\be \label{mu4}
\begin{split}
\mu_4(x,t,\x)=\id-\int_x^L e^{i\x^2(y-x)\hat \sig_3}(V_1\mu_4)(y,t,\x)dy &\\
-e^{-i\x^2(L-x)\hat \sig_3}\int_t^T e^{2i\x^4(\tau -t)\hat \sig_3}(V_2\mu_4)(L,\tau,\x)d\tau,
\end{split}
\ee
And we note that this choice implies the following inequalities on the contours,
\[
\ba{lcr}
(x_1,t_1) \rightarrow (x,t):&y-x\le 0,&\tau-t\geq 0\\
(x_2,t_2) \rightarrow (x,t):&y-x\le 0,&\tau-t\le 0\\
(x_3,t_3) \rightarrow (x,t):&y-x\geq 0,&\tau-t\le 0\\
(x_4,t_4) \rightarrow (x,t):&y-x\geq 0,&\tau-t\geq 0\\
\ea
\]
We find that the second column of the matrix equation (\ref{muj}) involves $e^{[2i(\x^2(y-x)+2\x^4(\tau-t))]}$ ,
and using the above inequalities it implies that the exponential term of $\mu_j$ is bounded in the following regions of the complex $\x$-plane,
\[
\ba{lr}
(x_1,t_1) \rightarrow (x,t):&\{\im \x^2\le 0\}\cap \{\im \x^4\geq 0\},\\
(x_2,t_2) \rightarrow (x,t):&\{\im \x^2\le 0\}\cap \{\im \x^4\le 0\},\\
(x_3,t_3) \rightarrow (x,t):&\{\im \x^2\geq 0\}\cap \{\im \x^4\le 0\},\\
(x_4,t_4) \rightarrow (x,t):&\{\im \x^2\geq 0\}\cap \{\im \x^4\geq 0\}.\\
\ea
\]
Thus,we have
\be \label{mujdom}
\ba{lccr}
\mu_1=(\mu_1^{(2)},\mu_1^{(3)}),&\mu_2=(\mu_2^{(1)},\mu_2^{(4)}),&\mu_3=(\mu_3^{(3)},\mu_3^{(2)}),&\mu_4=(\mu_4^{(4)},\mu_4^{(1)}),
\ea
\ee
where $\mu_j^{(i)}$ denotes $\mu_j$ is bounded and analytic for $\x \in D_i$,and $D_i=\om_i \cup (-\om_i),-\om_i=\{-\x \in \C |\x \in \om_i\},\om_i=\{\x \in \C|\frac{i-1}{4}\pi <\x <\frac{i}{4}\pi\}$,see Figure 3.
\par
But the functions $\mu_1(0,t,\x)$,$\mu_2(0,t,\x)$,$\mu_3(x,0,\x)$,$\mu_3(L,t,\x)$,$\mu_4(L,t,\x)$ are bounded in larger domains:
\bea \label{mujlardom}
\mu_1(0,t,\x)=(\mu_1^{(24)}(0,t,\x),\mu_1^{(13)}(0,t,\x)),\nonumber\\
\mu_2(0,t,\x)=(\mu_2^{(13)}(0,t,\x),\mu_2^{(24)}(0,t,\x)),&\nonumber \\
\mu_3(x,0,\x)=(\mu_3^{(34)}(x,0,\x),\mu_3^{(12)}(x,0,\x)),\\
\mu_4(L,t,\x)=(\mu_4^{(13)}(L,t,\x),\mu_4^{(24)}(L,t,\x)),&\nonumber \\
\mu_4(L,t,\x)=(\mu_4^{(24)}(L,t,\x),\mu_4^{(13)}(L,t,\x)).&&\nonumber
\eea
By (\ref{muinfty}), it holds that
\be \label{mujinfty}
\mu_j(x,t,\x)=\id +O(\frac{1}{\x}),\quad \x \rightarrow \infty ,j=1,2,3,4.
\ee
The $\mu_j$ are the fundamental eigenfunctions needed for the formulation of a Riemann-Hilbert problem in the complex $\x$-plane.
\par
In order to derive a Riemann-Hilbert problem,we have to compute the jumps across the boundaries of the $D_j$'s. It turns out that the relevant jump matrices can be uniquely defined in terms of three $2\times 2$-matrix valued spectral functions $s(\x),S(\x)$ and $S_L(\x)$ defined as follows.
\par
Assuming $\mu$ and $\tilde \mu$ are the solutions of Eq.(\ref{Lax:DNLS6}),then the two solutions are related by
\be \label{murela}
\mu(x,t,\x)=\tilde \mu(x,t,\x)e^{-i(\x^2 x+2\x^4 t)\hat \sig_3}C_0(\x),
\ee
where $C_0(\x)$ is a $2\times 2$ matrix independent of $x$ and $t$.Let $\psi$ and $\tilde \psi$ be the solutions of Eq.(\ref{Lax:DNLS2}) corresponding to $\mu$ and $\tilde \mu$ according to
\be \label{psimu}
\psi(x,t,\x)=e^{i\int_{(0,0)}^{(x,t)}\Dta \hat \sig_3 }\mu(x,t,\x)D(x,t)e^{-i(\x^2 x+2\x^4 t)\sig_3},
\ee
but we note that there exists a $2\times 2$ matrix $C_1(\x)$ independent of $x$ and $t$ such that
\be \label{psirela}
\psi(x,t,\x)=\tilde \psi(x,t,\x)C_1(\x).
\ee
By using (\ref{murela}),(\ref{psimu}) and (\ref{psirela}),we have
\be \label{Crela}
C_0(\x)=e^{-i\int_{(0,0)}^{(L,0)}\Dta \hat \sig_3}C_1(\x).
\ee
\par
From (\ref{murela}),the functions $\mu_j$ are related by the equations
\be \label{s}
\mu_3(x,t,\x)=\mu_2(x,t,\x)e^{-i(\x^2 x+2\x^4 t)\hat \sig_3}s(\x),
\ee
\be \label{S}
\mu_1(x,t,\x)=\mu_2(x,t,\x)e^{-i(\x^2 x+2\x^4 t)\hat \sig_3}S(\x),
\ee
\be \label{SL}
\mu_4(x,t,\x)=\mu_2(x,t,\x)e^{-i(\x^2 x+2\x^4 t)\hat \sig_3}S_L(\x).
\ee
Evaluating equation (\ref{s}) at $(x,t)=(0,0)$,implies
\be \label{smu3}
s(\x)=\mu_3(0,0,\x).
\ee
Evaluating equation (\ref{S}) at $(x,t)=(0,0)$,gives
\be \label{Smu1}
S(\x)=\mu_1(0,0,\x).
\ee
Evaluating equation (\ref{S}) at $(x,t)=(0,T)$,yields
\be \label{Smu2}
S(\x)=(e^{2i\x^4T\hat \sig_3}\mu_2(0,T,\x))^{-1}.
\ee
Evaluating equation (\ref{SL}) at $(x,t)=(L,0)$,we have
\be \label{SLmu4}
S_L(\x)=\mu_4(L,0,\x).
\ee
Evaluating equation (\ref{SL}) at $(x,t)=(L,T)$,we get
\be \label{SLmu3}
S_L(\x)=(e^{2i\x^4T\hat \sig_3}\mu_3(L,T,\x))^{-1}.
\ee
Eq.(\ref{s}) and (\ref{SL}) imply
\be \label{sSL}
\mu_4(x,t,\x)=\mu_2(x,t,\x)e^{-i(\x^2 x+2\x^4 t)\hat \sig_3}[s(\x)e^{i\x^2 L \hat \sig_3}S_L(\x)],
\ee
which will lead to the global relation.
\par
Hence, the function $s(\x)$ can be obtained from the evaluations at $x=0$ of the function $\mu_3(x,0,\x)$;$S(\x)$ can be obtained from the evaluations at $t=T$ of the function $\mu_2(0,t,\x)$ and $\S_L(\x)$ can be obtained from the evaluations at $t=T$ of the function $\mu_4(L,t,\x)$. And these functions about $\mu_j$ satisfy the linear integral equations
\be \label{mu1x0}
\mu_1(0,t,\x)=\id-\int_t^T e^{2i\x^4(\tau-t)\hat \sig_3}(V_2\mu_1)(0,\tau,\x)d\tau,
\ee
\be \label{mu2x0}
\mu_2(0,t,\x)=\id+\int_0^t e^{2i\x^4(\tau-t)\hat \sig_3}(V_2\mu_2)(0,\tau,\x)d\tau,
\ee
\be \label{mu3t0}
\mu_3(x,0,\x)=\id-\int_x^L e^{i\x^2(y-x)\hat \sig_3}(V_1\mu_3)(y,0,\x)dy,
\ee
\be \label{mu3xL}
\mu_3(L,t,\x)=\id+\int_0^t e^{2i\x^4(\tau-t)\hat \sig_3}(V_2\mu_3)(L,\tau,\x)d\tau,
\ee
\be \label{mu4xL}
\mu_4(L,t,\x)=\id-\int_t^T e^{2i\x^4(\tau-t)\hat \sig_3}(V_2\mu_4)(L,\tau,\x)d\tau.
\ee
By evaluating the equations (\ref{V1}) at $t=0$ and (\ref{V2}) at $x=0$,$x=L$,we find the equations
\be \label{V1t0}
V_1(x,0,\x)=\left(
\ba{lc}
-\frac{i}{2}|q_0|^2&\x q_0e{-i\int_0^x |q_0|^2dy}\\
\x \bar q_0e^{i\int_0^x |q_0|^2dy}&\frac{i}{2}|q_0|^2
\ea
\right),
\ee

\be \label{V2x0}
{\small V_2(0,t,\x)=\left(
\ba{lc}
-i\x^2|g_0|^2-\frac{3i}{4}|g_0|^4-\frac{1}{2}(\bar g_1g_0-\bar g_0g_1)&(2\x^3g_0+i\x g_1+\x g_0|g_0|^2)e^{-2i\int_0^t \Dta_2(0,\tau)d\tau}\\
(2\x^3\bar g_0-i\x \bar g_1+\x \bar g_0|g_0|^2)e^{2i\int_0^t \Dta_2(0,\tau)d\tau}&i\x^2|g_0|^2+\frac{3i}{4}|g_0|^4+\frac{1}{2}(\bar g_1g_0-\bar g_0g_1)
\ea
\right),}
\ee
\be \label{V2xL}
{\small V_2(L,t,\x)=\left(
\ba{lc}
-i\x^2|f_0|^2-\frac{3i}{4}|f_0|^4-\frac{1}{2}(\bar f_1f_0-\bar f_0f_1)&(2\x^3f_0+i\x f_1+\x f_0|f_0|^2)e^{-2i\int_0^t \Dta_2(L,\tau)d\tau}\\
(2\x^3\bar f_0-i\x \bar f_1+\x \bar f_0|f_0|^2)e^{2i\int_0^t \Dta_2(L,\tau)d\tau}&i\x^2|f_0|^2+\frac{3i}{4}|f_0|^4+\frac{1}{2}(\bar f_1f_0-\bar f_0f_1)
\ea
\right).}
\ee
where $q_0(x)=q(x,0)$,$g_0(t)=q(0,t)$,$g_1(t)=q_x(0,t)$,$f_0(t)=q(L,t)$ and $f_1(t)=q_x(L,t)$ are the initial and boundary values of $q(x,t)$,and
\be \label{Dta2x0}
\Dta_2(0,t)=\frac{3}{4}|g_0|^4-\frac{i}{2}(\bar g_1g_0-\bar g_0g_1).
\ee
\be \label{Dta2xL}
\Dta_2(L,t)=\frac{3}{4}|f_0|^4-\frac{i}{2}(\bar f_1f_0-\bar f_0f_1).
\ee
These expressions for $V_1(x,0,\x)$,$V_2(0,t,\x)$ and $V_2(L,t,\x)$ contain only $q_0(x)$,$\{g_0(t),g_1(t)\}$ and $\{f_0(t),f_1(t)\}$, respectively. Therefore, the integral equation (\ref{mu3t0}) determining $s(\x)$ is defined in terms of the initial data $q_0(x)$, the integral equation (\ref{mu2x0}) determining $S(\x)$ is defined in terms of the initial data $\{g_0(t),g_1(t)\}$ and the integral equation (\ref{mu3xL}) determining $S_L(\x)$ is defined in terms of the initial data $\{f_0(t),f_1(t)\}$.
\par
Let us show the function $\mu(x,t,\x)$ satisfy the symmetry relations.
\begin{theorem}\label{symmetry}
For $j=1,2,3,4$,the function $\mu(x,t,\x)=\mu_j(x,t,\x)$ satisfies the symmetry relations \bea \label{sym}
\mu_{11}(x,t,\x)=\ol {\mu_{22}(x,t,\bar \x)},&&\nonumber \\
\mu_{21}(x,t,\x)=\ol {\mu_{12}(x,t,\bar \x)},&&
\eea
as well as
\bea \label{evenodd}
\mu_{11}(x,t,-\x)=\mu_{11}(x,t,\x),&&\nonumber \\
\mu_{12}(x,t,-\x)=-\mu_{12}(x,t,\x),&&\nonumber \\
\mu_{21}(x,t,-\x)=-\mu_{21}(x,t,\x),&&\nonumber \\
\mu_{22}(x,t,-\x)=\mu_{22}(x,t,\x).&&
\eea
\end{theorem}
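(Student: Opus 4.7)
The plan is to reduce both symmetries to a uniqueness statement for the Volterra integral equations (\ref{mu1})--(\ref{mu4}). The strategy is: (i) verify that the coefficient matrices $V_1,V_2$ in the Lax pair (\ref{Lax:DNLS6}) possess two involutive symmetries, one under $\x\mapsto\bar\x$ combined with complex conjugation, one under $\x\mapsto-\x$; (ii) transfer these to the Lax pair itself, producing from $\mu_j$ a new matrix-valued solution of (\ref{Lax:DNLS6}); (iii) check that the new solution shares the base-point normalization $\id$ at $(x_j,t_j)$ and satisfies the same Volterra equation; (iv) invoke uniqueness and read off the four scalar identities.

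\textbf{Step 1: symmetries of $V_1$ and $V_2$.} Let $A=\bigl(\begin{smallmatrix}0&1\\1&0\end{smallmatrix}\bigr)$. Using $r=\bar q$, the realness of $\Delta$ from (\ref{Dta}) (which follows because $|q|^4\in\R$ and $\bar q_xq-\bar qq_x$ is purely imaginary, so $\frac{i}{2}(\bar q_xq-\bar qq_x)\in\R$), and the explicit formulas (\ref{V1})--(\ref{V2}), I will check the matrix identities
\begin{equation*}
A\,\ol{V_k(x,t,\bar\x)}\,A=V_k(x,t,\x),\qquad \sig_3\,V_k(x,t,-\x)\,\sig_3=V_k(x,t,\x),\qquad k=1,2.
\end{equation*}
The first rests on $\ol{e^{\mp 2i\int\Delta}}=e^{\pm 2i\int\Delta}$ together with the swap $q\leftrightarrow\bar q$ produced by conjugation by $A$. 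The second is purely algebraic: the diagonal entries of $V_1,V_2$ contain only $\x^2$ while the off-diagonal entries are homogeneous of odd degree in $\x$.

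\textbf{Step 2: lift to the Lax pair and match initial data.} Set $\nu(x,t,\x):=A\,\ol{\mu_j(x,t,\bar\x)}\,A$ and $\hat\nu(x,t,\x):=\sig_3\,\mu_j(x,t,-\x)\,\sig_3$. Using $A\sig_3A=-\sig_3$ (so that $A[\sig_3,M]A=-[\sig_3,AMA]$) and the sign flips $i\bar\x^2\mapsto-i\x^2$, $2i\bar\x^4\mapsto-2i\x^4$ under conjugation, the equation (\ref{Lax:DNLS6}) for $\mu_j$ at $\bar\x$ transforms into the identical equation at $\x$ for $\nu$. Similarly, since $\sig_3[\sig_3,M]\sig_3=[\sig_3,\sig_3M\sig_3]$ and the spectral prefactors $\x^2,\x^4$ are invariant under $\x\mapsto-\x$, (\ref{Lax:DNLS6}) is preserved by $M\mapsto\sig_3 M\sig_3$. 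Because $A\id A=\id=\sig_3\id\sig_3$, both $\nu$ and $\hat\nu$ equal $\id$ at the base point $(x_j,t_j)$. It remains to verify that they satisfy the very same integral equations (\ref{mu1})--(\ref{mu4}); this amounts to the identities $A\,e^{i\x^2(y-x)\sig_3}A=e^{-i\x^2(y-x)\sig_3}$ and $A\,e^{2i\x^4(\tau-t)\sig_3}A=e^{-2i\x^4(\tau-t)\sig_3}$ (which combine with the sign flip from Step~1 to leave the integrands invariant), while under $\x\mapsto-\x$ the exponentials are unchanged outright. By the uniqueness of the solution of these Volterra equations, $\nu=\mu_j$ and $\hat\nu=\mu_j$.

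\textbf{Step 3: extract the scalar identities.} Writing out $A M A=\bigl(\begin{smallmatrix}m_{22}&m_{21}\\m_{12}&m_{11}\end{smallmatrix}\bigr)$ in $\mu_j(x,t,\x)=A\,\ol{\mu_j(x,t,\bar\x)}\,A$ gives exactly (\ref{sym}); writing out $\sig_3 M\sig_3=\bigl(\begin{smallmatrix}m_{11}&-m_{12}\\-m_{21}&m_{22}\end{smallmatrix}\bigr)$ in $\mu_j(x,t,\x)=\sig_3\,\mu_j(x,t,-\x)\,\sig_3$ gives (\ref{evenodd}).

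\textbf{Main obstacle.} The only nontrivial point is the bookkeeping in Step~1 for $V_2$: one must simultaneously exploit $r=\bar q$, the realness of $\Delta$, and the fact that $\bar q_xq-\bar q q_x$ is purely imaginary in order to match the $(1,1)$ entry of $V_2(x,t,\x)$ with the conjugate of the $(2,2)$ entry of $V_2(x,t,\bar\x)$. Everything else is a mechanical verification once the correct conjugation involutions $A(\cdot)A$ and $\sig_3(\cdot)\sig_3$ have been isolated.
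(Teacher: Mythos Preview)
Your argument is correct and is precisely the standard one: the paper's own proof consists solely of the sentence ``Following the proposition~2.1's proof in \cite{l}'', and what you have written is exactly that proof---exhibit the two involutions $M\mapsto A\,\overline{M(\bar\x)}\,A$ and $M\mapsto\sig_3 M(-\x)\sig_3$, check they preserve $V_1,V_2$ and hence the Lax pair (\ref{Lax:DNLS6}), observe they fix the normalization $\id$ at $(x_j,t_j)$, and conclude by uniqueness for the Volterra equations (\ref{mu1})--(\ref{mu4}). One cosmetic remark: the off-diagonal entries of $V_2$ are odd in $\x$ but not homogeneous (they contain both $\x^3$ and $\x$ terms), so ``odd'' rather than ``homogeneous of odd degree'' is the accurate description; this does not affect the argument.
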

\begin{proof}
Following the proposition $2.1$'s proof in \cite{l}.
\end{proof}
\par
If $\psi(x,t)$ satisfies (\ref{Lax:DNLS2}),it follows that det($\psi$) is independent of $x$ and $t$.Hence,since det$D(x,t)=1$,the determinant of the function $\mu$ corresponding to $\psi$ according to (\ref{psimu}) is also independent of $x$ and $t$.In particular,for $\mu_j,j=1,2,3,4,$evaluation of det($\mu_j$) at $(x_j,t_j)$ shows that
\be \label{mujdet}
det(\mu_j)=1,j=1,2,3,4,
\ee
In particular,
\[
dets(\x)=detS(\x)=detS_L(\x)=1.
\]
\par
It follows from (\ref{sym}) that
\[
\ba{lc}
s_{11}(\x)=\ol {s_{22}(\bar \x)},&s_{21}(\x)=\ol {s_{12}(\bar \x)},\\
S_{11}(\x)=\ol {S_{22}(\bar \x)},&S_{21}(\x)=\ol {S_{12}(\bar \x)},\\
S_{L11}(\x)=\ol {S_{L22}(\bar \x)},&S_{L21}(\x)=\ol {S_{L12}(\bar \x)},
\ea
\]
so that we use the following notations for $s(\x), S(\x)$ and $S_L(\x)$.
\be \label{sSSL}
\ba{lc}
s(\x)=\left(\ba{lr}\ol{a(\bar \x)}&b(\x)\\
\ol{b(\bar \x)}&a(\x)\ea \right), &S(\x)=\left(\ba{lr}\ol{A(\bar \x)}&B(\x)\\
\ol{B(\bar \x)}&A(\x)\ea \right), \\
S_L(\x)=\left(\ba{lr}\ol{\ca(\bar \x)}&\cb(\x)\\
\ol{\cb(\bar \x)}&\ca(\x)\ea \right)
\ea
\ee
The relations in (\ref{evenodd}) imply that $a(\x),A(\x)$ and $\ca(\x)$ are even functions of $\x$,whereas $b(\x),B(\x)$ and $\cb(\x)$ are odd functions of $\x$,that is,
\be \label{aAeo}
\ba{lr}
a(-\x)=a(\x),&b(-x)=-b(\x)\\
A(-\x)=A(\x),&B(-x)=-B(\x)\\
\ca(-\x)=\ca(\x),&\cb(-x)=-\cb(\x).
\ea
\ee
\par
The definitions of $\mu_3(0,0,\x),\mu_2(0,T,\x),\mu_3(L,0,\x)$ imply
\be \label{smu3x0t0}
s(\x)=\mu_3(0,0,\x)=\id-\int_0^L e^{i\x^2(y-x)\hat \sig_3}(V_1\mu_3)(y,0,\x)dy,
\ee
\be \label{Smu2x0tT}
S^{-1}(\x)=e^{2i\x^4T\hat \sig_3}\mu_2(0,T,\x)=\id+\int_0^T e^{2i\x^4(\tau-t)\hat \sig_3}(V_2\mu_2)(0,\tau,\x)d\tau,
\ee
\be \label{SLmu3xLtT}
S_L^{-1}(\x)=e^{2i\x^4T\hat \sig_3}\mu_3(L,T,\x)=\id+\int_0^T e^{2i\x^4(\tau-t)\hat \sig_3}(V_2\mu_3)(L,\tau,\x)d\tau.
\ee
\par
Equations (\ref{mujlardom}),the determinant conditions (\ref{mujdet}),and the large $\x$ behavior of $\mu_j$ imply the following properties \par
$\ul{a(\x),b(\x)}$
\begin{itemize}
 \item $a(\x),b(\x)$ are defined for $\{\x \in \C|\im \x^2\geq 0\}$ and analytic for $\{\x \in \C|\im \x^2> 0\}$.
 \item $a(\x)\ol{a(\bar \x)}-b(\x)\ol{b(\bar \x)}=1,\x^2 \in \R$.
 \item $a(\x)=1+O(\frac{1}{\x}),b(\x)=O(\frac{1}{\x}),\x \rightarrow \infty ,\im \x^2 \geq 0$.
\end{itemize}
\par
$\ul{A(\x),B(\x)}$
\begin{itemize}
 \item $A(\x),B(\x)$ are defined for $\{\x \in \C|\im \x^4\geq 0\}$ and analytic for $\{\x \in \C|\im \x^4> 0\}$.
 \item $A(\x)\ol{A(\bar \x)}-B(\x)\ol{B(\bar \x)}=1,\x^4 \in \R$. \item $A(\x)=1+O(\frac{1}{\x}),B(\x)=O(\frac{1}{\x}),\x \rightarrow \infty ,\im \x^4 \geq 0$.
\end{itemize}
\par
$\ul{\ca(\x),\cb(\x)}$
\begin{itemize}
 \item $\ca(\x),\cb(\x)$ are defined for $\{\x \in \C|\im \x^4\geq 0\}$ and analytic for $\{\x \in \C|\im \x^4> 0\}$.
 \item $\ca(\x)\ol{\ca(\bar \x)}-\cb(\x)\ol{\cb(\bar \x)}=1,\x^4 \in \R$.
 \item $\ca(\x)=1+O(\frac{1}{\x}),\cb(\x)=O(\frac{1}{\x}),\x \rightarrow \infty ,\im \x^4 \geq 0$.
\end{itemize}

{\bf 2.4  The global relation}
\par
We now show that the spectral functions are not independent but they satisfy an important global relation.
\begin{theorem}
Let the spectral functions $a(\x),b(\x),A(\x),B(\x),\ca(\x),\cb(\x)$,be defined in equations (\ref{sSSL}),where $s(\x),S(\x),S_L(\x)$ are defined by equations (\ref{smu3}),(\ref{Smu2}),(\ref{SLmu3}),and $\mu_2,\mu_3$ are defined by equations (\ref{mu2}),(\ref{mu3}).Then these spectral functions are not independent but they satisfy an important global relation
\be \label{glorela}
(a(\x)\ca{\x}+\ol{b(\bar \x)} e^{2i\x^2L}\cb(\x))B(\x)-(b(\x)\ca(\x)+\ol{a(\bar \x)} e^{2i\x^2L}\cb(\x))A(\x)=e^{4i\x^4T}c^+(\x),
\ee
where $c^+(\x)$ denotes the (12) element of $-\int_0^L (e^{i\x^2y\hat \sig_3})(V_1\mu_4)(y,T,\x)dy$,and $\mu_4$ is defined by (\ref{mu4}).
\end{theorem}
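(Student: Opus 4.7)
The plan is to compute $\mu_4(0,T,\x)$ in two independent ways and then match the $(1,2)$-entries of the two resulting expressions. The first computation chains the scattering-type relations among the $\mu_j$; the second uses the integral representation of $\mu_4$ directly.

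First, invert (\ref{S}) to get $\mu_2 = \mu_1\, e^{-i(\x^2 x + 2\x^4 t)\hat\sig_3}(S^{-1}(\x))$, and substitute into (\ref{sSL}):
\begin{equation*}
\mu_4(x,t,\x) = \mu_1(x,t,\x)\,e^{-i(\x^2 x + 2\x^4 t)\hat\sig_3}\bigl[S^{-1}(\x)\,s(\x)\,e^{i\x^2 L\hat\sig_3}S_L(\x)\bigr].
\end{equation*}
Evaluating at $(x,t) = (x_1,t_1) = (0,T)$, where $\mu_1(0,T,\x) = \id$ by (\ref{muj}), yields
\begin{equation*}
\mu_4(0,T,\x) = e^{-2i\x^4 T\hat\sig_3}\bigl[S^{-1}(\x)\,s(\x)\,e^{i\x^2 L\hat\sig_3}S_L(\x)\bigr].
\end{equation*}
On the other hand, evaluating (\ref{mu4}) at $(0,T)$ makes the $t$-integral vanish (it runs from $T$ to $T$), leaving
\begin{equation*}
\mu_4(0,T,\x) = \id - \int_0^L e^{i\x^2 y\hat\sig_3}(V_1\mu_4)(y,T,\x)\,dy,
\end{equation*}
so by the definition of $c^+(\x)$ the $(1,2)$-entry of $\mu_4(0,T,\x)$ equals $c^+(\x)$.

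Equating the $(1,2)$-entries of the two expressions for $\mu_4(0,T,\x)$ and noting that the outer conjugation $e^{-2i\x^4 T\hat\sig_3}$ introduces a factor $e^{-4i\x^4 T}$ on any off-diagonal entry, we obtain $e^{4i\x^4 T}c^+(\x) = \bigl[S^{-1}s\,e^{i\x^2 L\hat\sig_3}S_L\bigr]_{12}$. It then remains to expand this $(1,2)$-entry using the parametrization (\ref{sSSL}) of $s,S,S_L$ together with $\det S = 1$ from (\ref{mujdet}), which gives the closed form $S^{-1} = \bigl(\begin{smallmatrix} A & -B \\ -\ol{B(\bar\x)} & \ol{A(\bar\x)} \end{smallmatrix}\bigr)$, and the observation that conjugation by $e^{i\x^2 L\sig_3}$ multiplies the $(1,2)$- and $(2,1)$-entries of $S_L$ by $e^{2i\x^2 L}$ and $e^{-2i\x^2 L}$ respectively while fixing the diagonal. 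A direct $2\times 2$ multiplication of the three resulting matrices and extraction of the $(1,2)$-entry then produces precisely the identity (\ref{glorela}).

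The only difficulty is purely clerical: carefully tracking which exponential factor $e^{\pm 2i\x^2 L}$ or $e^{\pm 2i\x^4 T}$ attaches to which matrix entry under the conjugation action of $\hat\sig_3$, and grouping the resulting terms as on the right-hand side of (\ref{glorela}). No new analytic input is required beyond the scattering identity (\ref{sSL}) and the Volterra integral equation (\ref{mu4}).
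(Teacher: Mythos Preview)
Your proof is correct and follows essentially the same route as the paper. The paper's proof is a single line---``evaluate equation (\ref{sSL}) at $(x,t)=(0,T)$''---and you have simply expanded that hint: rewriting (\ref{sSL}) in terms of $\mu_1$ via (\ref{S}) so that the normalization $\mu_1(0,T,\x)=\id$ can be used, then comparing with the integral representation (\ref{mu4}) at $(0,T)$ and reading off the $(1,2)$-entry. This is exactly the computation the paper has in mind.
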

\begin{proof}
We just evaluating equation (\ref{sSL}) at $(x,t)=(0,T)$.
\end{proof}

{\bf 2.5  The jump conditions}
\par
Let $M(x,y,\x)$ be defined by
\be \label{Mdef}
\ba{lr}
M_+=(\frac{\mu_2^{(1)}}{\al(\x)},\mu_4^{(1)}),\x \in D_1,&M_-=(\frac{\mu_1^{(2)}}{d(\x)},\mu_3^{(2)}),\x \in D_2,\\
M_+=(\mu_3^{(3)},\frac{\mu_1^{(3)}}{\ol{d(\bar \x)}}),\x \in D_3,&M_-=(\mu_4^{(4)},\frac{\mu_2^{(4)}}{\ol{\al(\bar \x)}}),\x \in D_4 ,
\ea
\ee
where the scalars $\al(\x)$ and $d(\x)$ are defined below
\be \label{alpha}
\al(\x)=a(\x)\ca(\x)+\ol{b(\bar \x)}e^{2i\x^2L}\cb(\x),
\ee
\be \label{dxi}
d(\x)=a(\x)\ol{A(\bar \x)}-b(\x)\ol{B(\bar \x)}. \ee
\par
These definitions imply
\be \label{Mdet}
detM(x,t,\x)=1,
\ee
and
\be \label{Minf}
M(x,t,\x)=\id+O(\frac{1}{\x}),\quad \x \rightarrow \infty.
\ee
\par
\begin{theorem} \label{RH}
Let $M(x,t,\x)$ be defined by equation (\ref{Mdef}),where $\mu_1(x,t,\x)$,$\mu_2(x,t,\x)$ and $\mu_3(x,t,\x)$,$\mu_4(x,t,\x)$ are defined by equations (\ref{mu1}),(\ref{mu2}) and (\ref{mu3}),(\ref{mu4}),and $q(x,t)$ is a smooth function.Then $M$ satisfies the jump condition
\be \label{Mjump}
M_+(x,t,\x)=M_-(x,t,\x)J(x,t,\x),\quad \x^4\in \R,
\ee
where the $2\times 2$ matrix $J$ is defined by
\be \label{Jdef}
J=\left\{\ba{lc}
J_1,&arg\x^2=0,\\
J_2,&arg\x^2=\frac{\pi}{2},\\
J_3=J_2J_1^{-1}J_4,&arg\x^2=\pi,\\
J_4,&arg\x^2=\frac{3}{2}\pi.
\ea
\right.
\ee
and
\[
J_1=\left(\ba{lc}\frac{1}{\al(\x)\ol{\al(\bar \x)}}&\frac{\beta (\x)}{\ol{\al(\bar \x)}}e^{-2i\tha(\x)}\\
-\frac{\ol{\beta(\bar \x)}}{\al(\x)}e^{2i\tha(\x)}&1
\ea
\right),
\]
\[
J_2=\left(\ba{lc}\frac{a(\x)}{\al(\x)}&\cb(\x)e^{-2i\tha(\x)}e^{2i\x^2L}\\
-\frac{\ol{B(\bar \x)}}{d(\x)\al(\x)}e^{2i\tha(\x)}&\frac{\dta(\x)}{d(\x)}
\ea
\right),
\]
\[
J_4=\left(\ba{lc}\frac{\ol{a(\bar \x)}}{\ol{\al(\bar \x)}}&\frac{B(\x)}{\ol{d(\bar \x)}\ol{\al(\bar \x)}}e^{-2i\tha(\x)}\\
-\ol{\cb(\bar \x)}e^{2i\tha(\x)}e^{-2i\x^2L}&\frac{\ol{\dta(\bar \x)}}{\ol{d(\bar \x)}}
\ea
\right),
\]
\[
\tha(\x)=\x^2x+2\x^4t,
\]
\be \label{albeta}
\al(\x)=a(\x)\ca(\x)+\ol{b(\bar \x)}e^{2i\x^2L}\cb(\x),
\beta(\x)=b(\x)\ca(\x)+\ol{a(\bar \x)}e^{2i\x^2L}\cb(\x),
\ee
\be \label{dtad}
\dta(\x)=\al(\x)\ol{A(\bar \x)}-\beta(\x)\ol{B(\bar \x)},
d(\x)=a(\x)\ol{A(\bar \x)}-b(\x)\ol{B(\bar \x)} .
\ee
\end{theorem}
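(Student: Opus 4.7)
The strategy is to compute the jump $J = M_-^{-1} M_+$ on each of the four rays $\arg \x^2 \in \{0, \pi/2, \pi, 3\pi/2\}$ separating $D_1, \ldots, D_4$ by re-expressing both $M_+$ and $M_-$ on that ray in the frame of a single reference eigenfunction, which I will take to be $\mu_2$. The scattering relations (\ref{s}), (\ref{S}), and (\ref{sSL}) already show that each $\mu_j$ equals $\mu_2$ times a conjugated spectral matrix. Unwinding the conjugation by $e^{-i\tha \hat\sig_3}$ (diagonal entries are fixed, off-diagonals pick up factors $e^{\mp 2i\tha}$), every column of $\mu_1, \mu_3, \mu_4$ becomes an explicit linear combination of the two columns of $\mu_2$, with coefficients read off from the entries of $S$, $s$, and the combined matrix $s\, e^{i\x^2 L \hat\sig_3} S_L$, respectively. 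A direct multiplication using the parametrization (\ref{sSSL}) identifies the latter, on $\x^2 \in \R$, as
\[
s\, e^{i\x^2 L \hat\sig_3} S_L = \left(\ba{cc} \ol{\al(\bar\x)} & \beta(\x) \\ \ol{\beta(\bar\x)} & \al(\x) \ea\right),
\]
which is why $\al, \beta$ from (\ref{albeta}) appear naturally in the jump formulas.

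With this preparation, $M_\pm$ on each ray can be written as $(\mu_2^{\mathrm{col}_1},\, \mu_2^{\mathrm{col}_2}) \cdot P_\pm$ for an explicit $2\times 2$ matrix $P_\pm$ built from the spectral functions, and $J = P_-^{-1} P_+$ reduces to matrix algebra. On $\arg \x^2 = 0$ (separating $D_1$ and $D_4$) both $P_\pm$ are triangular, and the determinantal identity $\al \ol{\al(\bar\x)} - \beta \ol{\beta(\bar\x)} = 1$, which is $\det(s\, e^{i\x^2 L \hat\sig_3} S_L) = 1$, collapses the $(2,2)$-entry of $J_1$ to $1$, producing the formula in the statement. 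On $\arg \x^2 = \pi/2$ (between $D_1$ and $D_2$) the matrix $P_-$ is built from $\mu_1^{(2)}/d$ and $\mu_3^{(2)}$; after inversion the $(1,2)$-entry becomes $(a\beta - b\al) e^{-2i\tha}$, which simplifies to $\cb(\x) e^{2i\x^2 L} e^{-2i\tha}$ via the identity $a \ol{a(\bar\x)} - b \ol{b(\bar\x)} = 1$ coming from $\det s = 1$, while the $(2,2)$-entry reduces to $\dta(\x)/d(\x)$ by direct substitution of the definitions (\ref{dtad}). The ray $\arg \x^2 = 3\pi/2$ (between $D_3$ and $D_4$) is handled by the same template, and together with the Schwarz symmetry (\ref{sym}) this yields $J_4$ in the stated form.

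For the final ray $\arg \x^2 = \pi$ it is cleanest to bypass the direct calculation and appeal to cyclic consistency: since $M$ is sectionally analytic on $\C \setminus \{\x^4 \in \R\}$ with $M \to \id$ at infinity and no other singularity at the origin, the ordered product of the four jumps along a small loop encircling the origin must equal the identity, which forces $J_3 = J_2 J_1^{-1} J_4$ as claimed. The principal obstacle is not conceptual but bookkeeping: correctly conjugating $s\, e^{i\x^2 L \hat\sig_3} S_L$ by $e^{-i\tha \hat\sig_3}$, keeping the dependence on $\x^2 L$ separated from the dependence on $\tha$ in the off-diagonals, and recognising the combinations $\al, \beta, d, \dta$ inside entries of the inverse matrices once $\det s = \det S = \det S_L = 1$ has been used to reduce each inversion to the classical adjugate formula.
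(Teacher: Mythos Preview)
Your proposal is correct and follows exactly the standard Fokas--Its computation that the paper invokes by reference to \cite{fi}: express each $\mu_j$ in the $\mu_2$-frame via the scattering relations (\ref{s}), (\ref{S}), (\ref{sSL}), so that $M_\pm=\mu_2 P_\pm$ on every ray, and then reduce $J=P_-^{-1}P_+$ to algebra with the determinant identities $\det s=\det S=\det S_L=1$. One small remark: your derivation of $J_3=J_2J_1^{-1}J_4$ does not actually require an analyticity/loop argument around the origin---it is a direct algebraic consequence of $M^{(1)}=M^{(4)}J_1$, $M^{(1)}=M^{(2)}J_2$, $M^{(3)}=M^{(4)}J_4$, which immediately give $(M^{(2)})^{-1}M^{(3)}=J_2J_1^{-1}J_4$.
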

\begin{proof}
We can following the method of Proposition$2.2$'s proof in \cite{fi}.
\end{proof}
The matrix $M(x,t,\x)$ defined in (\ref{Mdef}) is in general a meromorphic function of $\x$ in $\C \backslash \{\x^4\in \R\}$.The possible poles of $M$ are generated by the zeros of $\al(\x)$,$d(\x)$ and by the complex conjugates of these zeros. Since $a(\x),\ca(\x)$ are even functions and $b(\x),\cb(\x)$ are odd functions,$\al(\x)$ is even function. That means each zero $\x_j$ of $\al(\x)$ is accompanied by another zero at $-\x_j$. Similarly, each zero $\lam_j$ of $d(\x)$ is accompanied by a zero at $-\lam_j$. In particular,both $\al(\x)$ and $d(\x)$ have even number of zeros.
\par
\begin{hypothesis}\label{assumezero}
We assume that
\begin{itemize}
 \item $a(\x)$ has $2\Lam$ simple zeros $\{k_j\}_{j=1}^{2\Lam},2\Lam=2\Lam_1+2\Lam_2$,such that $k_j,j=1,\cdots ,2\Lam_1$,lie in $D_1$ and $\x_j,j=2\Lam_1+1,\cdots ,2n$ lie in $D_2$.
 \item $\al(\x)$ has $2n$ simple zeros $\{\x_j\}_{j=1}^{2n},2n=2n_1+2n_2$,such that $\x_j,j=1,\cdots ,2n$,lie in $D_1$.
 \item $d(\x)$ has $2N$ simple zeros $\{\lam_j\}_{j=1}^{2N},2N=2N_1+2N_2$,such that $\lam_j,j=1,\cdots ,2N$,lie in $D_2$.
 \item None of the zeros of $\al(\x)$ coincides with any of the zeros of $a(\x)$.
 \item None of the zeros of $d(\x)$ coincides with any of the zeros of $a(\x)$.
\end{itemize}
\end{hypothesis}
\par
According to the \ref{assumezero}, we can evaluate the associated residues of $M$.We introduce the notation $[A]_1([A]_2)$ for the first(second) column of a $2\times 2$ matrix $A$ and we also write $\dot a(\x)=\frac{da}{d\x}$. Then we get the following proposition
\begin{proposition}\label{residue}
\be \label{M1resal}
Res_{\x=\x_j}[M(x,t,\x)]_1=c_j^{(1)}e^{2i(\x_j^2x+2\x_j^4t)}[M(x,t,\x_j)]_2,
\ee
\be \label{M2resal}
Res_{\x=\bar \x_j}[M(x,t,\x)]_2=\bar c_j^{(1)}e^{-2i(\bar \x_j^2x+2\bar \x_j^4t)}[M(x,t,\bar \x_j)]_1,
\ee
\be \label{M1resd}
Res_{\x=\lam_j}[M(x,t,\x)]_1=c_j^{(2)}e^{2i(\lam_j^2x+2\lam_j^4t)}[M(x,t,\lam_j)]_2,
\ee
\be \label{M2resd}
Res_{\x=\bar \lam_j}[M(x,t,\x)]_2=\bar c_j^{(2)}e^{-2i(\bar \lam_j^2x+2\bar \lam_j^4t)}[M(x,t,\bar \lam_j)]_1,
\ee
where
\be \label{rescoff}
\ba{c}
c_j^{(1)}=\frac{1}{\dot \al(\x_j)\beta(\x_j)}=\frac{a(\x_j)}{e^{2i\x_j^2L}\cb(\x_j)\dot \al(\x_j)},\\
c_j^{(2)}=\frac{\ol{B(\bar \lam_j)}}{a(\lam_j)\dot d(\lam_j)}.
\ea
\ee
\end{proposition}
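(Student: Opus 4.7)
The plan is to exploit the three scattering relations (\ref{s}), (\ref{S}), (\ref{SL}) together with the composite relation (\ref{sSL}) to rewrite, at each pole of $M$, the singular column in terms of the regular one. By definition (\ref{Mdef}), the only possible poles of $M$ in $\C\setminus\{\x^4\in\R\}$ come from zeros of $\al(\x)$ (for the first column of $M_+$ in $D_1$ and, via symmetry, the second column of $M_-$ in $D_4$) and from zeros of $d(\x)$ (for the first column of $M_-$ in $D_2$ and the second column of $M_+$ in $D_3$). Under Hypothesis \ref{assumezero} these poles are simple, so each residue will follow from a single L'H\^opital step, provided one identifies the column vector $[\mu_j]_k$ at $\x_j$ (resp.\ $\lam_j$) in terms of the column that remains analytic and plays the role of $[M]_2$ (resp.\ $[M]_1$) at that point.

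First I would extract the pointwise identity implicit in (\ref{sSL}). Expanding $\mu_4=\mu_2 e^{-i\tha(\x)\hat\sig_3}[s(\x)e^{i\x^2 L\hat\sig_3}S_L(\x)]$ column-by-column and recognising the $(2,2)$ entry of the bracketed matrix as $\al(\x)$ and the $(1,2)$ entry as $\beta(\x)$ (see (\ref{albeta})) yields
\[
[\mu_4]_2(x,t,\x)=\beta(\x)\,e^{-2i\tha(\x)}[\mu_2]_1(x,t,\x)+\al(\x)[\mu_2]_2(x,t,\x).
\]
Evaluating at a zero $\x_j$ of $\al$ and solving for $[\mu_2]_1$ gives $[\mu_2]_1(x,t,\x_j)=e^{2i\tha(\x_j)}[\mu_4]_2(x,t,\x_j)/\beta(\x_j)$; since $[M_+]_2=[\mu_4]_2$ on $D_1$, substituting into $\mathrm{Res}_{\x=\x_j}[M_+]_1=[\mu_2]_1(x,t,\x_j)/\dot\al(\x_j)$ yields (\ref{M1resal}) with $c_j^{(1)}=1/(\dot\al(\x_j)\beta(\x_j))$. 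The equivalent form in (\ref{rescoff}) follows from the elementary identity $a(\x)\beta(\x)-b(\x)\al(\x)=e^{2i\x^2 L}\cb(\x)$, a direct consequence of $\det s(\x)=1$; at $\al(\x_j)=0$ this reduces to $\beta(\x_j)=e^{2i\x_j^2 L}\cb(\x_j)/a(\x_j)$.

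For the $d$-zeros $\lam_j\in D_2$ the strategy is analogous but uses (\ref{s}) and (\ref{S}) in tandem. Solving (\ref{s}) for $[\mu_2]_1,[\mu_2]_2$ in terms of the columns of $\mu_3$ and inserting into the first column of (\ref{S}) produces, after collecting terms,
\[
[\mu_1]_1 = d(\x)[\mu_3]_1 + \ol{a(\bar\x)B(\bar\x)-b(\bar\x)A(\bar\x)}\;e^{2i\tha(\x)}[\mu_3]_2 .
\]
At $\x=\lam_j$ the first term vanishes. In the remaining coefficient I would substitute $\ol{A(\bar\lam_j)}=b(\lam_j)\ol{B(\bar\lam_j)}/a(\lam_j)$ (from $d(\lam_j)=0$) and then collapse the result using $a(\lam_j)\ol{a(\bar\lam_j)}-b(\lam_j)\ol{b(\bar\lam_j)}=1$, which reduces the coefficient to $\ol{B(\bar\lam_j)}/a(\lam_j)$. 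Combined with $[M_-]_2=[\mu_3]_2$ on $D_2$, this yields (\ref{M1resd}) with $c_j^{(2)}=\ol{B(\bar\lam_j)}/(a(\lam_j)\dot d(\lam_j))$. The conjugate residue formulas (\ref{M2resal}) and (\ref{M2resd}) then follow mechanically from Theorem \ref{symmetry}: the symmetries swap rows and columns while complex-conjugating both the spectral data and the exponential $e^{2i\tha(\x)}\mapsto e^{-2i\tha(\bar\x)}$, sending a residue of $[M]_1$ at $\x_j$ or $\lam_j$ to a residue of $[M]_2$ at $\bar\x_j$ or $\bar\lam_j$ with coefficient $\bar c_j^{(1)}$ or $\bar c_j^{(2)}$.

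The main obstacle I anticipate is the algebraic manipulation in the $d$-case: simplifying $\ol{a(\bar\x)B(\bar\x)-b(\bar\x)A(\bar\x)}$ at $\x=\lam_j$ to $\ol{B(\bar\lam_j)}/a(\lam_j)$ requires applying the two independent constraints $d(\lam_j)=0$ and $\det s=1$ in the right order, and it is easy to mis-place a complex conjugate. One must also keep careful track of domains of analyticity when transferring a residue identity at $\x_j\in D_1$ to its symmetric partner at $\bar\x_j\in D_4$, since the even/odd parity relations (\ref{aAeo}) combined with the conjugation symmetries of Theorem \ref{symmetry} must be matched with the exponentials $e^{\pm 2i\tha(\x)}$. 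Beyond these two points the proof is bookkeeping driven by $\det s=\det S=\det S_L=1$ and by the piecewise definition (\ref{Mdef}).
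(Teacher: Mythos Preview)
Your proposal is correct and follows exactly the standard argument the paper defers to (the paper's ``proof'' is simply ``Following \cite{fi}''); you have in fact supplied the details that the authors omit. The column identities you derive from (\ref{sSL}) and from (\ref{s})--(\ref{S}), together with the simplification via $\det s=1$ and the Hypothesis that $a(\x_j)\neq 0$, $a(\lam_j)\neq 0$, are precisely the mechanism used in \cite{fi}, adapted here to the $\x^2$/$\x^4$ exponentials of the DNLS Lax pair.
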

\begin{proof}
Following \cite{fi}.
\end{proof}

{\bf 2.6  The inverse problem}
\par
The inverse problem involves reconstructing the potential $q(x,t)$ from the eigenfunctions $\mu_j(x,t,\x),j=1,2,3,4$. We follow the steps of \cite{l}. That means we want to reconstruct the potential $q(x,t)$, then the first step is using any of the four eigenfunctions $\mu_j,j=1,2,3,4$, to compute $m(x,t)$ according to
\[
m(x,t)=\lim_{\x \rightarrow \infty}(\x \mu_j(x,t,\x))_{12}.
\]
The second step is determining $\Dta(x,t)$ by
\[
\ba{l}
rq=4|m|^2,\\
r_xq-rq_x=4(\bar m_xm-m_x\bar m)-32i|m|^4,\\
\Dta=2|m|^2dx-(4|m|^4+2i(\bar m_xm-m_x\bar m))dt.
\ea
\]
finally,$q(x,t)$ is given by
\[
q(x,t)=2im(x,t)e^{2i\int_{(0,0)}^{(x,t)}\Dta}.
\]

\section{The definition of spectral functions and The Riemann-Hilbert Problem}

{\bf 3.1  The definition of spectral functions}
\par
The analysis of section 2 motivates the following definitions for the spectral functions.
\begin{definition}(The spectral functions $a(\x)$ and $b(\x)$)\label{abdef}
Given the smooth function $q_0(x)$, we define the map
\[
\mathbb S:\{q_0(x)\}\rightarrow \{a(\x),b(\x)\}
\]
with
\[
\left(\ba{c}b(\x)\\a(\x)\ea
\right)=[\mu_3(0,\x)]_2,\quad \im \x^2\geq 0.
\]
where $\mu_3(x,\x)$ is the unique solution of the Volterra linear integral equation
\[
\mu_3(x,\x)=\id-\int_x^L e^{i\x^2(y-x)\hat \sig_3}(V_1\mu_3)(y,0,\x)dy,
\]
and $V_1(x,0,\x)$ is given in terms of $q_0(x)$ by (\ref{V1t0}).
\end{definition}
\ul{Properties of $a(\x),b(\x)$}
\begin{itemize}
 \item $a(\x),b(\x)$ are defined for $\{\x \in \C|\im \x^2\geq 0\}$ and analytic for $\{\x \in \C|\im \x^2> 0\}$,
 \item $a(\x)\ol{a(\bar \x)}-b(\x)\ol{b(\bar \x)}=1,\x^2 \in \R$,
 \item $a(\x)=1+O(\frac{1}{\x}),b(\x)=O(\frac{1}{\x}),\x \rightarrow \infty ,\im \x^2 \geq 0$,\\
       in particular,
       \[
       \mbox{$a(\x),b(\x),\ol{a(\bar \x)}e^{2i\x^2L},\ol{b(\bar \x)}e^{2i\x^2L}$  are bounded for  $\im \x^2 \geq 0$}.
       \]
\end{itemize}

\begin{remark}
The definition \ref{abdef} gives rise to the map,
\[
\mathbb S:\{q_0(x)\}\rightarrow \{a(\x),b(\x)\}.
\]
The inverse of this map,
\[
\mathbb Q:\{a(\x),b(\x)\}\rightarrow \{q_0(x)\},
\]
can be defined as follows:
\be \label{Sinverse}
\ba{l}
q_0(x)=2im(x,t)e^{4i\int_0^x|m(y)|^2dy},\\
m(x)=\lim_{\x \rightarrow \infty}(\x M^{(x)}(x,\x))_{12},
\ea
\ee
where $M^{(x)}(x,\x)$ is the unique solution of the following Riemann-Hilbert problem:
\begin{itemize}
\item $M^{(x)}(x,\x)=\left\{\ba{lr}
                          M_-^{(x)}(x,\x)&\im \x^2\le 0,\\
                          M_+^{(x)}(x,\x)&\im \x^2\geq 0.\ea
                          \right.$\\
     is a sectionally meromorphic function.
\item $M_+^{(x)}(x,\x)=M_-^{(x)}(x,\x)J^{(x)}(x,\x),\quad \x^2\in \R$,\\where
\be \label{Mxjump}
J^{(x)}(x,\x)=\left(\ba{lr}\frac{1}{a(\x)\ol{a(\bar \x)}}&\frac{b(\x)}{\ol{a(\bar \x)}}e^{-2i\x^2x}\\
                           -\frac{\ol{b(\bar \x)}}{a(\x)}e^{2i\x^2x}&1
                           \ea
                           \right),\quad \x^2\in \R.
\ee
\item  $M^{(x)}(x,\x)=\id+O(\frac{1}{\x}),\quad \x \rightarrow \infty.$
\item $a(\x)$ has $2\Lam$ simple zeros $\{k_j\}_{j=1}^{2\Lam},2\Lam=2\Lam_1+2\Lam_2$,such that $k_j,j=1,\cdots ,2\Lam_1$,lie in $D_1$,and $k_j,j=2\Lam_1+1,\cdots ,2\Lam$ lie in $D_2$.
\item The first column of $M_+^{(x)}$ has simple poles at $\x=k_j,j=1,\cdots ,2\Lam$,and the second column of $M_-^{(x)}$ has simple poles at $\x=\bar k_j,j=1,\cdots ,2\Lam$.The associated residues are given by
    \be \label{t0RHPresM1}
    Res_{\x=k_j}[M^{(x)}(x,\x)]_1=\frac{1}{\dot a(k_j)b(k_j)}e^{2ik_j^2x}[M^{(x)}(x,k_j)]_2,\quad j=1,\cdots,2\Lam.
    \ee
    \be \label{t0RHPresM2}
    Res_{\x=\bar k_j}[M^{(x)}(x,\x)]_2=\frac{1}{\ol{\dot a(k_j)b(k_j)}}e^{-2i\bar k_j^2x}[M^{(x)}(x,\bar k_j)]_1,\quad j=1,\cdots,2\Lam.
    \ee
\end{itemize}
\end{remark}

\begin{definition}(The spectral functions $A(\x)$ and $B(\x)$)\label{ABdef}
Given the smooth function $g_0(t),g_1(t)$,we define the map
\[
\mathbb S^{(0)}:\{g_0(t),g_1(t)\}\rightarrow \{A(\x),B(\x)\}
\]
with
\[
\left(\ba{c}B(\x)\\A(\x)\ea
\right)=[\mu_1(0,\x)]_2,\quad \im \x^4\geq 0.
\]
where $\mu_1(t,\x)$ is the unique solution of the Volterra linear integral equation
\[
\mu_1(x,\x)=\id-\int_t^T e^{2i\x^4(\tau-t)\hat \sig_3}(V_2\mu_1)(0,\tau,\x)d\tau,
\]
and $V_2(0,t,\x)$ is given in terms of $g_0(t),g_1(t)$ by (\ref{V2x0}).
\end{definition}
\ul{Properties of $A(\x),B(\x)$}
\begin{itemize}
 \item $A(\x),B(\x)$ are defined for $\{\x \in \C|\im \x^4\geq 0\}$ and analytic for $\{\x \in \C|\im \x^4> 0\}$,
 \item $A(\x)\ol{A(\bar \x)}-B(\x)\ol{B(\bar \x)}=1,\x^4 \in \R$,
 \item $A(\x)=1+O(\frac{1}{\x}),B(\x)=O(\frac{1}{\x}),\x \rightarrow \infty ,\im \x^4 \geq 0$,in particular,
 \[
 \mbox{$A(\x),B(\x)$ are bounded for $\x \in D_1\cup D_2$.}
 \]
\end{itemize}
\begin{remark}\label{ABinv}
The definition \ref{ABdef} gives rise to the map,
\[
\mathbb S^{(0)}:\{g_0(x),g_1(x)\}\rightarrow \{A(\x),B(\x)\}
\]
The inverse of this map,
\[
\mathbb Q^{(0)}:\{A(\x),B(\x)\}\rightarrow \{g_0(x),g_1(x)\},
\]
can be defined as follows:
\be \label{S0inverse}
\ba{l}
g_0(t)=2im_{12}^{(1)}(t)e^{2i\int_0^t\Dta_2(\tau)d\tau},\\
g_1(t)=(4m_{12}^{(3)}(t)+|g_0(t)|^2m_{12}^{(1)}(t))e^{2i\int_0^t\Dta_2(\tau)d\tau}+ig_0(t)(2m_{22}^{(2)}(t)+|g_0(t)|^2),
\ea
\ee
where \[
\Dta_2(t)=4|m_{12}^{(1)}|^4+8(\re [m_{12}^{(1)}\bar m_{12}^{(3)}]-|m_{12}^{(1)}|^2\re [m_{22}^{(2)}]).
\]
The functions $m^{(1)}(t),m^{(2)}(t),m^{(3)}(t)$ are determined by the asymptotic expansion
\[
M^{(t)}(t,\x)=\id+\frac{m^{(1)}(t)}{\x}+\frac{m^{(2)}(t)}{\x^2}+\frac{m^{(3)}(t)}{\x^3}+O(\frac{1}{\x^4}),\quad \x \rightarrow \infty,
\]
where $M^{(t)}(t,\x)$ is the unique solution of the following Riemann-Hilbert problem:
\begin{itemize}
\item $M^{(t)}(t,\x)=\left\{\ba{lr}
                          M_-^{(t)}(t,\x)&\im \x^4\le 0,\\
                          M_+^{(t)}(t,\x)&\im \x^4\geq 0.\ea
                          \right.$\\
     is a sectionally meromorphic function.
\item $M_+^{(t)}(t,\x)=M_-^{(t)}(t,\x)J^{(t,0)}(t,\x),\quad \x^4\in \R$,\\where
\be \label{Mxjump}
J^{(t,0)}(t,\x)=\left(\ba{lr}\frac{1}{A(\x)\ol{A(\bar \x)}}&\frac{B(\x)}{\ol{A(\bar \x)}}e^{-4i\x^4t}\\
                           -\frac{\ol{B(\bar \x)}}{A(\x)}e^{4i\x^4t}&1
                           \ea
                           \right),\quad \x^4\in \R.
\ee
\item  $M^{(t)}(t,\x)=\id+O(\frac{1}{\x}),\quad \x \rightarrow \infty.$
\item $A(\x)$ has $2A$ simple zeros $\{K_j\}_{j=1}^{2A},2A=2A_1+2A_2$,such that $K_j,j=1,\cdots ,2A_1$,lie in $D_1$,and $K_j,j=2A_1+1,\cdots ,2A$ lie in $D_3$.
\item The first column of $M_+^{(t)}$ has simple poles at $\x=K_j,j=1,\cdots ,2A$,and the second column of $M_-^{(t)}$ has simple poles at $\x=\bar K_j,j=1,\cdots ,2A$.The associated residues are given by
    \be \label{x0RHPresM1}
    Res_{\x=K_j}[M^{(t)}(t,\x)]_1=\frac{1}{\dot A(K_j)B(K_j)}e^{4iK_j^4t}[M^{(t)}(t,K_j)]_2,\quad j=1,\cdots,2A.
    \ee
    \be \label{x0RHPresM2}
    Res_{\x=\bar K_j}[M^{(t)}(t,\x)]_2=\frac{1}{\ol{\dot A(K_j)B(K_j)}}e^{-4i\bar K_j^4t}[M^{(t)}(t,\bar K_j)]_1,\quad j=1,\cdots,2A.
    \ee
\end{itemize}
\end{remark}

\begin{definition}(The spectral functions $\ca(\x)$ and $\cb(\x)$)\label{cacbdef}
Given the smooth function $f_0(t),f_1(t)$,we define the map
\[
\mathbb S^{(L)}:\{f_0(t),f_1(t)\}\rightarrow \{\ca(\x),\cb(\x)\}
\]
with
\[
\left(\ba{c}\cb(\x)\\\ca(\x)\ea
\right)=[\mu_4(0,\x)]_2,\quad \im \x^4\geq 0.
\]
where $\mu_4(t,\x)$ is the unique solution of the Volterra linear integral equation
\[
\mu_4(x,\x)=\id-\int_t^T e^{2i\x^4(\tau-t)\hat \sig_3}(V_2\mu_4)(L,\tau,\x)d\tau,
\]
and $V_2(L,t,\x)$ is given in terms of $f_0(t),f_1(t)$ by (\ref{V2xL}).
\end{definition}
\ul{Properties of $\ca(\x),\cb(\x)$}
\begin{itemize}
 \item $\ca(\x),\cb(\x)$ are defined for $\{\x \in \C|\im \x^4\geq 0\}$ and analytic for $\{\x \in \C|\im \x^4> 0\}$,
 \item $\ca(\x)\ol{\ca(\bar \x)}-\cb(\x)\ol{\cb(\bar \x)}=1,\x^4 \in \R$,
 \item $\ca(\x)=1+O(\frac{1}{\x}),\cb(\x)=O(\frac{1}{\x}),\x \rightarrow \infty ,\im \x^4 \geq 0$.
\end{itemize}
\begin{remark}  \label{cacb}
The definition \ref{cacbdef} gives rise to the map,
\[
\mathbb S^{(L)}:\{f_0(x),f_1(x)\}\rightarrow \{\ca(\x),\cb(\x)\}.
\]
The inverse of this map,
\[
\mathbb Q^{(L)}:\{\ca(\x),\cb(\x)\}\rightarrow \{f_0(x),f_1(x)\},
\]
can be defined as follows
\be \label{SLinverse}
\ba{l}
f_0(t)=2im_{12}^{(1)}(t)e^{2i\int_0^t\Dta_2^L(\tau)d\tau},\\
f_1(t)=(4m_{12}^{(3)}(t)+|g_0(t)|^2m_{12}^{(1)}(t))e^{2i\int_0^t\Dta_2^L(\tau)d\tau}+ig_0(t)(2m_{22}^{(2)}(t)+|g_0(t)|^2),
\ea
\ee
where \[
\Dta_2^L(t)=4|m_{12}^{(1)}|^4+8(\re [m_{12}^{(1)}\bar m_{12}^{(3)}]-|m_{12}^{(1)}|^2\re [m_{22}^{(2)}]),
\]
and the functions $m^{(1)}(t),m^{(2)}(t),m^{(3)}(t)$ are determined by the asymptotic expansion
\[
M^{(t)}(t,\x)=\id+\frac{m^{(1)}(t)}{\x}+\frac{m^{(2)}(t)}{\x^2}+\frac{m^{(3)}(t)}{\x^3}+O(\frac{1}{\x^4}),\quad \x \rightarrow \infty,
\]
where $M^{(t)}(t,\x)$ is the unique solution of the following Riemann-Hilbert problem:
\begin{itemize}
\item $M^{(t)}(t,\x)=\left\{\ba{lr}
                          M_-^{(t)}(t,\x)&\im \x^4\le 0,\\
                          M_+^{(t)}(t,\x)&\im \x^4\geq 0.\ea
                          \right.$\\
     is a sectionally meromorphic function.
\item $M_+^{(t)}(t,\x)=M_-^{(t)}(t,\x)J^{(t,L)}(t,\x),\quad \x^4\in \R$,\\where
\be \label{MxLjump}
J^{(t,L)}(t,\x)=\left(\ba{lr}\frac{1}{\ca(\x)\ol{\ca(\bar \x)}}&\frac{\cb(\x)}{\ol{\ca(\bar \x)}}e^{-4i\x^4t}\\
                           -\frac{\ol{\cb(\bar \x)}}{\ca(\x)}e^{4i\x^4t}&1
                           \ea
                           \right),\quad \x^4\in \R.
\ee
\item  $M^{(t)}(t,\x)=\id+O(\frac{1}{\x}),\quad \x \rightarrow \infty.$
\item $\ca(\x)$ has $2\ca$ simple zeros $\{\cK_j\}_{j=1}^{2\ca},2\ca=2\ca_1+2\ca_2$,such that $\cK_j,j=1,\cdots ,2\ca_1$,lie in $D_1$,and $\cK_j,j=2\ca_1+1,\cdots ,2\ca$ lie in $D_3$.
\item The first column of $M_+^{(t)}$ has simple poles at $\x=\cK_j,j=1,\cdots ,2\ca$,and the second column of $M_-^{(t)}$ has simple poles at $\x=\bar \cK_j,j=1,\cdots ,2\ca$.The associated residues are given by
    \be \label{xLRHPresM1}
    Res_{\x=\cK_j}[M^{(t)}(t,\x)]_1=\frac{1}{\dot \ca(\cK_j)\cb(\cK_j)}e^{4i\cK_j^4t}[M^{(t)}(t,\cK_j)]_2,\quad j=1,\cdots,2\ca.
    \ee
    \be \label{xLRHPresM2}
    Res_{\x=\bar \cK_j}[M^{(t)}(t,\x)]_2=\frac{1}{\ol{\dot \ca(\cK_j)\cb(\cK_j)}}e^{-4i\bar \cK_j^4t}[M^{(t)}(t,\bar \cK_j)]_1,\quad j=1,\cdots,2\ca.
    \ee
\end{itemize}
\end{remark}

\begin{definition}(An admissible set).
Given the smooth function $q_0(x)$ define $a(\x),b(\x)$ according to definition \ref{abdef}.Suppose that there exist smooth functions
\\ $g_0(t),g_1(t),f_0(t),f_1(t)$, such that \begin{itemize}
 \item The associated $A(\x),B(\x),\ca(\x),\cb(\x)$,defined according to definition \ref{ABdef} and \ref{cacbdef},satisfy the relation
  \be \label{globalrela}
  \ba{l}
  (a(\x)\ca(\x)+\ol{b(\bar \x)} e^{2i\x^2L}\cb(\x))B(\x)-(b(\x)\ca(\x)+\ol{a(\bar \x)} e^{2i\x^2L}\cb(\x))A(\x)\\
  =e^{4i\x^4T}c^+(\x), \ \x \in \C,
  \ea
  \ee
 where $c^+(\x)$ is an entire function,which is bounded for $\im \x^2\geq 0$ and $c^+(\x)=O(\frac{1}{\x})$,as $\x \rightarrow \infty$.
 \item $g_0(0)=q_0(0),g_1(0)=q'_0(0),f_0(0)=q_0(L),f_1(0)=q'_0(L)$.
\end{itemize}
Then we call the functions $g_0(t),g_1(t),f_0(t),f_1(t)$,an admissible set of functions with respect to $q_0(x)$.
\end{definition}

{\bf 3.2  The Riemann-Hilbert problem}
\par
\begin{theorem}\label{RHP}
Let $q_0(x)$ be a smooth function.Suppose that the set of functions $g_0(t),g_1(t),f_0(t),f_1(t)$,are admissible with respect to $q_0(x)$. Define the spectral functions $a(\x),b(\x),A(\x),B(\x),\ca(\x),\cb(\x)$,in terms of $q_0(x),g_0(t),g_1(t)$,
$f_0(t),f_1(t)$. According to the (\ref{assumezero}). Define $M(x,t,\x)$ as the solution of the following $2\times 2$ matrix Riemann-Hilbert problem
\begin{itemize}
 \item $M$ is sectionally meromorphic in $\C \backslash \{\x^4\in \R\}$,and has unit determinant.
 \item $M$ satisfies the jump condition
        \[
        M_+(x,t,\x)=M_-(x,t,\x)J(x,t,\x),\quad \x^4\in \R.
        \]
        where $M$ is $M_+$ for $\im \x^4\geq 0$,$M$ is $M_-$ for $\im \x^4\le 0$,and $J$ is defined in terms of $a,b,A,B,\ca,\cb$ by Eq.(\ref{Jdef}).
 \item \[
        M(x,t,\x)=\id+O(\frac{1}{\x}),\x \rightarrow \infty.
       \]
 \item Residue conditions (\ref{M1resal})-(\ref{M2resd}).

\end{itemize}
Then $M(x,t,\x)$ exists and is unique.
\par
Define $q(x,t)$ in terms of $M(x,t,\x)$ by
\be \label{qsoluM}
\ba{l}
q(x,t)=2im(x,t)e^{2i\int_{(0,0)}^{(x,t)}\Dta},\\
m(x,t)=\lim_{\x \rightarrow \infty}(\x \mu_j(x,t,\x))_{12},\\
\Dta=2|m|^2dx-(4|m|^4+2i(\bar m_xm-m_x\bar m))dt.
\ea
\ee
Then $q(x,t)$ solves the DNLS equation (\ref{equ:DNLSequation}) with
\[
q(x,0)=q_0(x),q(0,t)=g_0(t),q_x(0,t)=g_1(t),q(L,t)=f_0(t),q_x(L,t)=f_1(t).
\]
\end{theorem}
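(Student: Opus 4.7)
The plan is to follow the dressing-method framework used by Fokas--Its \cite{fi} and Lenells \cite{l}, adapting it to the finite-interval geometry. There are three tasks: (i) prove existence and uniqueness of $M$; (ii) show that the reconstructed $q(x,t)$ solves the DNLS equation (\ref{equ:DNLSequation}); and (iii) verify that $q$ attains the prescribed initial and boundary values.

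For task (i), I would appeal to a Zhou-type vanishing lemma. The symmetries of $a,b,A,B,\ca,\cb$ recorded in (\ref{sSSL})--(\ref{aAeo}), together with the consistency $J_3=J_2J_1^{-1}J_4$ at $\arg\x^2=\pi$, endow the jump matrix $J$ with the Schwarz-conjugation structure across $\{\x^4\in\R\}$ under which the homogeneous RH problem (zero normalization, trivial residue data) admits only the zero solution. Combined with the standard Fredholm theory for the associated singular integral equation on the four-ray contour, and with the residue conditions (\ref{M1resal})--(\ref{M2resd}) that fix the pole data, this yields existence and uniqueness of $M$ satisfying $M=\id+O(1/\x)$.

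For task (ii), I would show that $M$ satisfies a Lax pair of the form (\ref{Lax:DNLS6}). Because $J=e^{-i\tha\hat\sig_3}J_0$ with $J_0$ independent of $(x,t)$ and with $\tha=\x^2x+2\x^4t$, the quantities $P:=(M_x+i\x^2[\sig_3,M])M^{-1}$ and $R:=(M_t+2i\x^4[\sig_3,M])M^{-1}$ have no jumps across $\{\x^4\in\R\}$, and a direct residue computation using (\ref{M1resal})--(\ref{M2resd}) shows they have no poles at the zeros of $\al$ or $d$ either. The large-$\x$ expansion $M=\id+M^{(1)}/\x+M^{(2)}/\x^2+\cdots$ then forces $P$ and $R$ to be polynomials in $\x$ of the same shape as $V_1$ and $V_2$, with $m(x,t)=M^{(1)}_{12}$ playing the role of the potential through the reconstruction formula (\ref{qsoluM}). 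The compatibility $P_t-R_x+[P,R]=0$ then gives the DNLS equation exactly as in the asymptotic analysis of Section 2.1.

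Task (iii) is the delicate part. The initial condition $q(x,0)=q_0(x)$ follows by collapsing the $t$-contours at $t=0$ and comparing the resulting $x$-only RH problem with the one generated by $q_0$ via Definition \ref{abdef}. The boundary conditions at $x=0$ and $x=L$ constitute the main obstacle: one must identify the RH problems satisfied by $M(0,t,\x)$ and $M(L,t,\x)$, modulo contributions coming from the opposite endpoint, with those whose inverse maps $\mathbb{Q}^{(0)}$ (Remark \ref{ABinv}) and $\mathbb{Q}^{(L)}$ (Remark \ref{cacb}) return $\{g_0,g_1\}$ and $\{f_0,f_1\}$ respectively. Here the global relation (\ref{globalrela}) is indispensable: it allows the algebraic rearrangement of the cross-terms between the two endpoints into the factor $e^{4i\x^4T}c^+(\x)$, which is entire, $O(1/\x)$, and exponentially decaying in the appropriate half-plane $\im\x^4\geq 0$, and therefore produces no contribution to the jump across $\{\x^4\in\R\}$. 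The principal technical difficulty of the proof is the bookkeeping of these algebraic cancellations among the six spectral functions $\{a,b\},\{A,B\},\{\ca,\cb\}$ and the verification that the interval length $L$ enters only through the already-present exponentials $e^{\pm 2i\x^2L}$, so that no spurious correction to the boundary values at either endpoint survives.
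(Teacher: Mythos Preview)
Your overall architecture matches the paper: a vanishing-lemma/symmetry argument for existence and uniqueness, a dressing computation to show $M$ satisfies the Lax pair (\ref{Lax:DNLS6}) and hence $q$ solves DNLS, and a reduction to reference RH problems for the initial and boundary data. Tasks (i) and (ii) are handled in the paper exactly as you outline.

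For task (iii), however, the paper's execution differs from your sketch in two places. First, for $q(x,0)=q_0(x)$ the paper does \emph{not} reduce $M(x,0,\x)$ directly to the purely $x$-RH problem of Remark~3.1 (which involves only $a,b$). Instead it introduces explicit piecewise-holomorphic factors $\hat J_1,\hat J_4$ and shows that $M^{(\infty)}_j:=M^{(j)}(x,0,\x)\cdot(\hat J_j)^{\pm1}$ satisfies the \emph{half-line} RH problem of \cite{l}, whose jumps still carry $A,B$ through $d(\x)$; the conclusion $q(x,0)=q_0(x)$ is then imported from the half-line theory. Your phrase ``collapsing the $t$-contours'' understates this: at $t=0$ the jumps on the diagonal rays do not simply disappear, and going all the way down to the $\{a,b\}$-only problem would require an additional layer of transformations that the paper avoids by citing \cite{l}.

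Second, for the boundary conditions the paper does not argue that $e^{4i\x^4T}c^+(\x)$ ``produces no contribution to the jump.'' Rather, it writes down explicit $2\times2$ matrices $G^{(j)}(t,\x)$ and $F^{(j)}(t,\x)$ (holomorphic in $D_j$, tending to $\id$) such that $M(0,t,\x)G$ and $M(L,t,\x)F$ satisfy the RH problems of Remarks~\ref{ABinv} and~\ref{cacb}. The term $c^+(\x)e^{4i\x^4(T-t)}$ appears as an \emph{entry} of $G^{(1)}$ (and its conjugate in $G^{(4)}$, and similarly in $F^{(2)},F^{(3)}$); the global relation is what makes the algebraic identities $J_k(0,t,\x)G^{(1)}=G^{(k')}J^{(t,0)}$ close, and the boundedness/decay of $c^+$ in $D_1$ is what ensures $G^{(1)}\to\id$. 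So the $c^+$ term is absorbed into the transformation, not discarded from the jump. Your intuition about where the global relation enters is right, but the mechanism is the construction and verification of these explicit $G^{(j)},F^{(j)}$, which is where the real bookkeeping lives.
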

\begin{proof}
\par
If $\al(\x)$ and $d(\x)$ have no zeros for $\x \in D_1$ and for $\x \in D_2$ respectively, then the function $M(x,t,\x)$ satisfies a non-singular Riemann-Hilbert problem. Using the fact that the jump matrix $J$ satisfies appropriate symmetry conditions it is possible to show that this problem has a unique global solution \cite{aggh}.The case that $\al(\x)$ and $d(\x)$ have a finite number of zeros can be mapped to the case of no zeros supplemented by an algebraic system of equations which is always uniquely solvable \cite{aggh}.
\par
\mbox{\bf Proof that $q(x,t)$ satisfies the DNLS equation}
\par
Using arguments of the dressing method\cite{zs}, it can be verified directly that if $M(x,t,\x)$ is defined as the unique solution of the above Riemann-Hilbert problem,and if $q(x,t)$ is defined in terms of $M$ by equation(\ref{qsoluM}), then $q$ and $M$ satisfy both parts of the Lax pair, hence $q$ solves the DNLS equation.
\par
\mbox{\bf Proof that $q(x,0)=q_0(x)$}.
\par
Evaluating the equation(\ref{Jdef}) at $t=0$, we  can  divide the jump matrix into product of $2\times 2$ matrix.
By changing  the problem into the half-line Riemann-Hilbert problem,  we can prove $q(x,0)=q_0(x)$ in a similar way to Ref.\cite{l}.
\par
Evaluating the equation(\ref{Jdef}) at $t=0$:
\be \label{J1t0}
J_1=\left(\ba{lc}\frac{1}{\al(\x)\ol{\al(\bar \x)}}&\frac{\beta (\x)}{\ol{\al(\bar \x)}}e^{-2i\x^2x}\\
-\frac{\ol{\beta(\bar \x)}}{\al(\x)}e^{2i\x^2x}&1
\ea
\right),
\ee
\be \label{J2t0}
J_2=\left(\ba{lc}\frac{a(\x)}{\al(\x)}&\cb(\x)e^{-2i\x^2x}e^{2i\x^2L}\\
-\frac{\ol{B(\bar \x)}}{d(\x)\al(\x)}e^{2i\x^2x}&\frac{\dta(\x)}{d(\x)}
\ea
\right),
\ee
\be \label{J4t0}
J_4=\left(\ba{lc}\frac{\ol{a(\bar \x)}}{\ol{\al(\bar \x)}}&\frac{B(\x)}{\ol{d(\bar \x)}\ol{\al(\bar \x)}}e^{-2i\x^2x}\\
-\ol{\cb(\bar \x)}e^{2i\x^2x}e^{-2i\x^2L}&\frac{\ol{\dta(\bar \x)}}{\ol{d(\bar \x)}}
\ea
\right).
\ee
And then we introduce some $2\times 2$ matrix
\be \label{Jhatinfty}
\ba{lc}
J_1^{(\infty)}={\left(\ba{lc}\frac{1}{a(\x)\ol{a(\bar \x)}}&\frac{b(\x)}{\ol{a(\bar \x)}}e^{-2i\x^2x}\\-\frac{\ol{b(\bar \x)}}{a(\x)}e^{2i\x^2x}&1 \ea \right)},\\
J_2^{(\infty)}={\left(\ba{lc}1&0 \\-\frac{\ol{B(\bar \x)}}{a(\x)d(\x)}e^{2i\x^2x}&1 \ea \right)},\\
J_4^{(\infty)}={\left(\ba{lc}1&\frac{\ol{B(\bar \x)}}{\ol{a(\bar \x)}\ol{d(\bar \x)}}e^{-2i\x^2x}\\0&1 \ea \right)},\\
\hat J_1={\left(\ba{lc}\frac{a(\x)}{\al(\x)}&\cb(\x)e^{2i\x^2L}e^{-2i\x^2x}\\0&\frac{\al(\x)}{a(\x)}\ea \right)},\\
\hat J_4={\left(\ba{lc}\frac{\ol{a(\bar \x)}}{\ol{\al(\bar \x)}}&0\\-\ol{\cb(\bar \x)}e^{-2i\x^2L}e^{2i\x^2x}&\frac{\ol{\al(\bar \x)}}{\ol{a(\bar \x)}}\ea \right)}.&
\ea
\ee
Thus we can verify that:
\be \label{JandJhatrela}
\ba{lc}
J_1(x,0,\x)=\hat J_4J_1^{(\infty)}\hat J_1,&J_2(x,0,\x)=J_2^{(\infty)}\hat J_1,\\
J_3(x,0,\x)=J_2^{(\infty)}(J_1^{(\infty)})^{-1}J_4^{(\infty)}&J_4(x,0,\x)=\hat J_4J_4^{(\infty).}
\ea
\ee
Let $M^{(1)}(x,t,\x),M^{(2)}(x,t,\x),M^{(3)}(x,t,\x),M^{(4)}(x,t,\x)$,denote $M(x,t,\x)$ for $\x \in D_1,\x \in D_2,\x \in D_3,\x \in D_4$.Then the jump condition(\ref{Mjump}) becomes
\be \label{Mjumpt0}
M^{(1)}=M^{(4)}J_1,M^{(1)}=M^{(2)}J_2,M^{(3)}=M^{(2)}J_3,M^{(3)}=M^{(4)}J_4.
\ee
Using equations(\ref{JandJhatrela}),  we find
\be \label{Mjumphatinfty}
\ba{lc}
M^{(1)}(x,0,\x)=M^{(4)}(x,0,\x)\hat J_4J_1^{(\infty)}\hat J_1,\\
M^{(1)}(x,0,\x)=M^{(2)}(x,0,\x)J_2^{(\infty)}\hat J_1,\\
M^{(3)}(x,0,\x)=M^{(2)}(x,0,\x)J_2^{(\infty)}(J_1^{(\infty)})^{-1}J_4^{(\infty)},\\
M^{(3)}(x,0,\x)=M^{(4)}(x,0,\x)\hat J_4J_4^{(\infty).}
\ea
\ee
Defining $M_j^{(\infty)},j=1,2,3,4,$ by
\be \label{Minftydef}
\ba{lc}
M_1^{(\infty)}=M^{(1)}(x,0,\x)(\hat J_1)^{-1},&M_2^{(\infty)}=M^{(2)}(x,0,\x),\\
M_3^{(\infty)}=M^{(3)}(x,0,\x),&M_4^{(\infty)}=M^{(4)}(x,0,\x)\hat J_4.
\ea
\ee
then we find that the sectionally holomorphic function $M^{(\infty)}(x,\x)$ satisfies the jump conditions
\be \label{Minfjump}
\ba{lc}
M_1^{(\infty)}=M_4^{(\infty)}J_1^{(\infty)}, &M_1^{(\infty)}=M_2^{(\infty)}J_2^{(\infty)}, \\
M_3^{(\infty)}=M_2^{(\infty)}J_3^{(\infty)},&
M_3^{(\infty)}=M_4^{(\infty)}J_4^{(\infty)}.
\ea
\ee
These conditions are precisely the jump conditions satisfied by the unique solution of the Riemann-Hilbert problem associated with DNLS for $0<x<\infty,0<t<T$\cite{l}.Also $detM^{(\infty)}=1$ and $M^{(\infty)}=\id+O(\frac{1}{\x}),\x \rightarrow \infty.$Moreover,by a straightforward calculation one can verify that the associated residue conditions change into the proper residue conditions\cite{l}.
Therefore,$M^{(\infty)}(x,\x)$ satisfies the same Riemann-Hilbert problem as the Riemann-Hilbert problem associated with the half-line evaluated at $t=0$.Hence,$q(x,0)=q_0(x).$
\par
\mbox{\bf Proof that $q(0,t)=g_0(t),q_x(0,t)=g_1(t)$}
\par
Let $M^{(t,0)}(t,\x)$ be defined by
\be \label{Mt0def}
  M^{(t,0)}(t,\x)=M(0,t,\x)G(t,\x),
\ee
where $G$ is given by $G^{(1)},G^{(2)},v,G^{(3)},G^{(4)}$,for $\x \in D_1,\x \in D_2,\x \in D_3,\x \in D_4$.Suppose we can find matrices $G^{(j)}$ which are holomorphic,tend to $\id$ as $\x \rightarrow \infty$,and satisfy
\bea \label{Gcond}
   &J_2(0,t,\x)G^{(1)}(t,\x)=G^{(2)}(t,\x)J^{(t,0)}(\x),&   \nn \\
   &J_1(0,t,\x)G^{(1)}(t,\x)=G^{(4)}(t,\x)J^{(t,0)}(\x), &  \nn \\
   &J_4(0,t,\x)G^{(3)}(t,\x)=G^{(4)}(t,\x)J^{(t,0)}(\x), &
\eea
where $J^{(t,0)}(\x)$ is defined in(\ref{Mxjump}). Then equation(\ref{Gcond}) yield $J_3(0,t,\x)G^{(3)}(t,\x)=G^{(2)}(t,\x)J^{(t,0)}(\x)$, and equations(\ref{Mjumpt0}), (\ref{Mt0def}) imply that $M^{(t,0)}(t,\x)$ satisfies the Riemann-Hilbert problem defined in Remark \ref{ABinv}. Then the Remark \ref{ABinv} implies the desired result.
\par
We will show that such $G^{(j)}$ matrices exist and can be written  as
\be \label{Gdef}
 \begin{split}
  G^{(1)}(t,\x)=\left( \ba{lc} \frac{\al(\x)}{A(\x)}&c^+(\x)e^{4i\x^4(T-t)}\\0&\frac{A(\x)}{\al(\x)}\ea \right), \\
  G^{(2)}(t,\x)=\left(\ba{lc} d(\x)&-\frac{b(\x)}{\ol{A(\bar \x)}}e^{-4i\x^4t}\\0&\frac{1}{d(\x)}\ea \right),\\
  G^{(3)}(t,\x)=\left(\ba{lc} \frac{1}{\ol{d(\bar \x)}}&0\\-\frac{\ol{b(\bar \x)}}{A(\x)}e^{4i\x^4t}&\ol{d(\bar \x)}\ea \right),\\
  G^{(4)}(t,\x)=\left(\ba{lc}\frac{\ol{A(\bar \x)}}{\ol{\al(\bar \x)}}&0\\\ol{c^+(\bar \x)}e^{-4i\x^4(T-t)}&\frac{\ol{\al{\bar \x}}}{\ol{A(\bar \x)}}\ea \right).
 \end{split}
\ee
We use straight forward calculation to verify these $G^{(j)}$ matrices satisfy the conditions(\ref{Gcond}).
Similar to the proof of the equation $q(x,0)=q_0(x)$, it can be verified that the transformation (\ref{Mt0def}) replaces the residue conditions (\ref{M1resal})
-(\ref{M2resd}) by the residue conditions of Remark \ref{ABinv}.

\par
\mbox{\bf Proof that $q(L,t)=f_0(t),q_x(L,t)=f_1(t)$}
\par
Following the arguments similar to the prove above we must show that the matrices $F^{(j)}(t,\x)$ such that
\bea \label{FtLcon}
&J_2(L,t,\x)F^{(1)}(t,\x)=F^{(2)}(t,\x)J^{(t,L)}(\x),&   \nn \\
&J_1(L,t,\x)F^{(1)}(t,\x)=F^{(4)}(t,\x)J^{(t,L)}(\x), &  \nn \\
&J_4(L,t,\x)F^{(3)}(t,\x)=F^{(4)}(t,\x)J^{(t,L)}(\x). &
\eea
We will show that such $F^{(j)}$ matrices are
\be \label{Fdef}
\begin{split}
 F^{(1)}(t,\x)=\left(\ba{lc}-1&0\\-\frac{\ol{b(\bar \x)}e^{4i\x^4t+2i\x^2L}}{\al(\x)\ca(\x)}&-1\ea \right),\\
 F^{(2)}(t,\x)=\left(\ba{lc}-\ol{\ca(\bar \x)}&0\\ \frac{\ol{c^+(\bar \x)}e^{-4i\x^4(T-t)+2i\x^2L}}{d(\x)}&-\frac{1}{\ol{\ca(\bar \x)}}\ea \right),\\
 F^{(3)}(t,\x)=\left(\ba{lc}-\frac{1}{\ca(\x)}&\frac{c^+(\x)e^{4i\x^4(T-t)-2i\x^2L}}{\ol{d(\bar \x)}}\\0&-\ca(\x)\ea \right),\\
 F^{(4)}(t,\x)= \left(\ba{lc}-1&-\frac{b(\x)e^{-4i\x^4t-2i\x^2L}}{\ol{\al(\bar \x)\ca(\bar \x)}}\\0&-1\ea \right).
\end{split}
\ee
Similar to the previous case, the transformation
\be \label{MxLrel}
M(L,t,\x)\rightarrow M^{(t,L)}(t,\x)=M(L,t,\x)F(t,\x)
\ee
maps the Riemann-Hilbert problem of Theorem \ref{RHP} to the Riemann-Hilbert problem of Remark \ref{cacb}.
\end{proof}

\begin{remark}
It is well-known that  there are  three kinds of celebrated DNLS equations,  including   Kaup-Newell equation ( i.e Eq.(\ref{equ:DNLSequation})),
Chen-Lee-Liu equation \cite{chen2}
\be \label{CLL}
iq_t+q_{xx}+i|q|^2q_x=0,
\ee
and Gerdjikov-Ivanov equation \cite {kundu}
\be \label{GI}
iq_t+q_{xx}-iq^2\bar{q}_x+\frac{1}{2}|q|^4\bar{q}=0.
\ee
It has been  found that they may be transformed into each
other by  gauge transformations \cite{kundu, fan2}.  We can show that,   for  the decaying and smooth potential  $q$ ,
the Jost functions associated with   spectral problems  for  these three kinds of DNLS equations have the same asymptotic
behavior as $|x|\rightarrow \infty$,   so   the Chen-Lee-Liu equation (\ref{CLL})  and  Gerdjikov-Ivanov equation (\ref{GI})
 have  the same type Riemann-Hilbert problem  \ref{RHP} with the DNLS  equation (1.1).

\end{remark}

{\bf Acknowledgments.}
This work is  supported by  the National Science Foundation of China (No.10971031)
and Shanghai Shuguang Tracking Project (No.08GG01).

\end{document}